\documentclass[10pt]{bmc_article}
\usepackage{amsmath,epsfig,amssymb,enumerate,tabularx,empheq,amsthm,appendix}
\newtheorem{theorem}{Theorem}
\newtheorem{lemma}[theorem]{Lemma}

\newtheorem{corollary}[theorem]{Corollary}
\newtheorem{assumption}{{\bf Assumption}}

\newcommand{\transconj}[1]{\ensuremath{{#1}^{\mathrm{\scriptscriptstyle{H}}}}}
\newcommand{\transpose}[1]{\ensuremath{{#1}^{\mathrm{\scriptscriptstyle{T}}}}}

\newtheorem{remark}{Remark}
\usepackage{tikz}
\usepackage{pgfplots}

\renewcommand{\P}{{\bf P}}

\newcommand{\V}{{\bf V}}
\newcommand{\D}{{\bf D}}

\newcommand{\Y}{{\bf Y}}
\newcommand{\W}{{\bf W}}

\newcommand{\J}{{\bf J}}

\renewcommand{\H}{{\bf H}}
\newcommand{\I}{{\bf I}}

\DeclareMathOperator{\tr}{tr}

\title{Performance Analysis and Optimal Power Allocation for Linear Receivers Based on Superimposed Training }
\author{{Abla Kammoun\correspondingauthor$^{1}$ \email{abla.kammoun@supelec.fr}, Karim Abed-Meraim \email{karim.abed@telecom-paristech.fr}}\\}
\address{%
	\iid(1)  Alcatel Lucent chair, Sup\'elec France \\
	\iid(2) TSI, T\'el\'ecom-Paris Tech
}%
\setlength{\topmargin}{0.0cm}
\setlength{\textheight}{21.5cm}
\setlength{\oddsidemargin}{0cm} 
\setlength{\textwidth}{16.5cm}
\setlength{\columnsep}{0.6cm}


\begin{document}
\maketitle
\begin{abstract}
In this paper, we derive a performance comparison between two training-based schemes for Multiple-Input Multiple-Output (MIMO) systems. The two schemes are the time-division multiplexing scheme and the  recently proposed  data-dependent superimposed pilot scheme. For  both schemes, a closed-form expressions for the Bit Error Rate (BER) is provided.  We also determine, for both schemes, the  optimal allocation of power between pilot and data that minimizes the BER. 

{\bf Key words:} superimposed trai\-ning sequence, MIMO systems performance, linear receiver.
\end{abstract}
\section{Introduction}

The use of Multiple-Input Multiple-Output (MIMO) antenna systems enables  high data rates without any increase in bandwidth or power consumption. However, the good performance of MIMO systems requires  a priori knowledge of the channel at the receiver. In many practical systems, the receiver estimates the channel by time division multiplexing pilot symbols with the data. Although high quality of channel estimation could be achieved especially when using a large number of pilot symbols \cite{hassibi}, this method may entail a waste of the available channel ressources. An alternative method is the conventional superimposed training. It consists in transmitting pilots and data at the same time. However, since during channel estimation, data symbols act as a  source of noise, channel estimation is  affected. In the literature, the impact of channel estimation error upon the performance indexes has been investigated. In \cite{bohlin} and \cite{bohlin1}, a comparison between the performance of the conventional superimposed training scheme and the time-multiplexing based scheme has been carried out. The optimal power allocation between pilot and data that maximizes a lower bound of  the maximum mutual information criterion has been provided.  It has been shown that  the use of  the optimal conventional superimposed training scheme entails a gain in terms of channel capacity only in special scenarios (many receive antennas and/or short coherence time). In other scenarios, the superimposed training scheme suffers from high channel estimation errors and its gain over the time-multiplexing based scheme is often lost. For this reason, many alternatives to the conventional superimposed training scheme have been proposed in recent works. 

In \cite{ghogho}, M. Ghogho et al proposed to introduce a distortion to the data symbols, prior to adding the known pilot in such a way to guarantee the orthogonality between pilot and data sequences. It is shown that the channel estimation performance is by far enhanced as compared to the standard superimposed scheme. This technique is referred to as the data-dependent superimposed training (DDST). While the DDST scheme exhibits the same channel performance as its TDMT counterpart, the effect of the introduced distortion may considerably affect the detection performance. The aim of this paper is thus to study the BER performance of the DDST and TDMT schemes and to evaluate to which extent, the performance of the DDST scheme is altered.  

In the literature, the few works focusing on BER performance have been based on unrealistic assumptions like the uncorrelation between the noise and channel estimation error, \cite{ieeeWC.wang07}, \cite{ieeeWC.au07}. These assumptions make calculations feasible for fixed size dimensions but are far away from being realistic. To make derivations possible while keeping realistic conditions, we will relax the assumption of finite size dimensions by allowing the space and time dimensions to grow to infinity at the same rate. Working with the asymptotic regime, allows us to simplify the
derivations and at the same time, we observe that the obtained results apply as well to usual
sample and antenna-array sizes. We show also that the obtained expressions can be used to determine the optimal power allocation that minimize the BER. 

The remainder of this paper is as follows: In the next section, we introduce the system model. After that, we review in section \ref{sec:channel_estimation} the channel estimation and data detection processes  for the TDMT and DDST schemes. Section \ref{sec:approximation} is dedicated to the derivation of the asymptotic BER expressions. 
Based on these results, we determine the optimal allocation of power between data and training for both schemes. 
Finally, simulation results are provided in section \ref{sec:simulation} to validate the analytical derivation.

{\bf Notation:} Subscripts $^{\tiny \mbox{H}}$, $^{\#}$ and $\mathrm{Tr}\left(.\right)$ denote hermitian, pseudo-inverse and trace operators. The statistical expectation and the Kronecker product are denoted by $\mathbb{E}$ and $\otimes$. The $(K\times K)$ identity matrix is denoted by $ {\bf I}_K$, and the $(Q\times Q)$ matrix of all ones by $ {\bf 1}_Q$. The $(i,j)$$^{\mathrm{th}}$ entry of a matrix $\bf A$ is denoted by ${\bf A}_{i,j}$.

\section{System model and problem setting}
\subsection{Time-division multiplexing scheme}
We consider a $M\times K$ MIMO system operating over a flat fading channel. Two phases are considered:\\
{\bf First phase:} In the first phase, each transmitting antenna sends $N_1$ pilot symbols. The received symbol $\Y_1$ writes as:
$$
\Y_1=\H\P_t+\V_1,
$$
where $\P_t$ is the $K\times N_1$ pilot matrix and 
\begin{assumption}
$\H$ is the $M\times K$ channel matrix with independent and identically distributed (i.i.d.) Gaussian variables with zero mean and variance $\frac{1}{K}$,
\label{ass:A1}
\end{assumption}
\begin{assumption}
$\V_1$ is the $M\times N_1$ matrix whose entries are i.i.d. with variance $\sigma_v^2$
\label{ass:A2}
\end{assumption}

{\bf Second phase}: In the second phase, $N_2$ data symbols with power $\sigma_{w_t}^2$ are sent by each antenna so that the received signal matrix $\Y_2$ writes as:
$$
\Y_2=\H\W_t+\V_2
$$
where
\begin{assumption}
$\W_t$ is the $K\times N_2$ data matrix with i.i.d. bounded data symbols of power $\sigma_{w_t}^2$ and $\V_2$ is the $M\times N_2$ additive Gaussian noise matrix with entries of zero mean and variance $\sigma_v^2$. Moreover,  $\W_t$ is independent of $\V_1$ and $\V_2$.
\label{ass:A3}
\end{assumption}
\subsection{Data-dependent superimposed training scheme (DDST)}
An other alternative to TDMT based schemes is to send the training and data sequences at the same time. Since data is transmitted all the times, these schemes allow efficient bandwidth efficiency but may suffer from the interference caused by the training sequence. Ghogho et al proposed thus to distort the data so that is becomes orthogonal to the training sequence. The proposed distortion matrix $\D$ is defined as:
$$
\D=\I_N-\J
$$
where $\J=\frac{K}{N}{\bf 1}_{\frac{N}{K}}\otimes \I_K$, (we assume that $\frac{N}{K}$ is integer valued, $N$ being the sample size). This distortion matrix was shown to be optimal in the sense that it minimizes the averaged euclidean distance between the distorted and non-distorted data, \cite{ghoghossp}. The received signal matrix at each block is therefore given by:
$$
\Y=\H\W_d(\I_N-\J)+\H\P_d+\V
$$
where 
\begin{assumption}
$\W_d$ is the data matrix with i.i.d. bounded data symbols of power $\sigma_{w_d}^2$, and $\V$ is the $M\times N$ matrix whose entries are i.i.d. zero mean with variance $\sigma_v^2$.
\label{ass:A4}
\end{assumption}
Moreover,  $\P_d$ is the $K\times N$ training matrix . The chosen pilot matrix $\P_d$ should fulfill two requirements. It should be orthogonal to the distortion matrix $\D$, thus satisfying $\D\transconj{\P}_d={\bf 0}$, and also verify the orthogonality relation $\P_d\transconj{\P}_d=N\sigma_{P_d}^2\I_K$ in order to minimize the channel estimation error subject to a fixed training power. A possible pilot matrix that meets these requirements is :
\begin{assumption}
\begin{equation}
\P_d(k,n)=\sigma_{P_d}\exp\left(\jmath 2\pi kn/K\right) \hspace{0.2cm} \textnormal{with}\hspace{0.2cm} k=0,\cdots,K-1 \hspace{0.2cm}\textnormal{and} \hspace{0.2cm} n=0,\cdots,N-1.
\label{eq:pilot}
\end{equation}
\label{ass:A5}
\end{assumption}
\section{Channel estimation and data detection}
\label{sec:channel_estimation}
\subsection{TDMT scheme}
In the first phase, we assume that the receiver estimates the channel in the least-square sense. Hence, the channel estimate is given by:
\begin{align*}
\widehat{\H}_t&=\Y_1\transconj{\P}_t\left(\P_t\transconj{\P}_t\right)^{-1}\\
&=\H+\V_1\transconj{\P}_t\left(\P_t\transconj{\P}_t\right)^{-1}\\
&=\H+\Delta\H_t
\end{align*}
where $\Delta\H_t=\V_1\transconj{\P}_t\left(\P_t\transconj{\P}_t\right)^{-1}$. 
Thus the mean square error writes as:
$$
{\rm MSE}_t=M\sigma_v^2\tr\left(\P_t\transconj{\P}_t\right)^{-1}
$$
As it has been shown in \cite{hassibi}, the optimal training matrix that minimizes the MSE under a constant training energy $N_1\sigma_{P_t}^2$ should satisfy:
\begin{assumption}
$$
\P_t\transconj{\P}_t=N_1\sigma_{P_t}^2\I_K
$$
\label{ass:A5bis}
\end{assumption}
where $\sigma_{P_t}^2$ denotes the amount of power devoted to the transmission of a pilot symbol. The optimal minimum value for the ${\rm MSE}_t$ is then given by:
$$
{\rm MSE}_t=\frac{KM\sigma_v^2}{N_1\sigma_{P_t}^2}.
$$
In the data transmission phase, the linear receiver uses the channel estimate in order to retrieve the transmitted data. After channel inversion, the estimated data matrix is given by:
$$
\widehat{\bf W}_t=\left(\widehat{\H}_t\right)^{\#}\Y_2
$$
where $\left(\widehat{\H}_t\right)^{\#}$ denotes the pseudo-inverse matrix of $\widehat{\H}_t$. Assuming that the channel estimation error is small, the pseudo-inverse of the estimated matrix can be approximated by the linear part of the Taylor expansion as, \cite{book.magnus07}:
\begin{equation}
\left(\widehat{\H}_t\right)^{\#}=\H^{\#}-\H^{\#}\Delta\H_t\H^{\#}+\H^{\#}\transconj{\left(\H^{\#}\right)}\Delta\H_t\left(\I_M-\H\H^{\#}\right)
\label{eq:inverse}
\end{equation}
Substituting $\H^{\#}$ by $\left(\transconj{\bf H}{\H}\right)^{-1}\transconj{\H}$ in \eqref{eq:inverse}, we obtain:
$$
\left(\widehat{\H}_t\right)^{\#}=\H^{\#}-\H^{\#}\Delta\H_t\H^{\#}+\left(\transconj{\bf H}\H\right)^{-1}\Delta\transconj{\H}_t\boldsymbol{\Pi}
$$
where $\boldsymbol{\Pi}=\I_M-\H\left(\transconj{\H}\H\right)^{-1}\transconj{\H}$ is the orthogonal projector on the null space of $\H$. Hence,  the zero-forcing estimate of the transmitted matrix can be expressed as:
$$
\widehat{\bf W}_t={\bf W}_t-{\bf H}^{\#}\Delta\H_t{\bf W}_t+\left({\bf H}^{\#}-{\bf H}^{\#}\Delta\H_t{\bf H}^{\#}\right){\bf V}_2+\left(\transconj{\bf H}\H\right)^{-1}\transconj{\left(\Delta\H_t\right)}{\boldsymbol{\Pi}}{\V}_2.
$$
Consequently, the effective post-processing noise $\Delta{\bf W}_t=\widehat{\bf W}_t-{\bf W}_t$ could be written as:
$$
\Delta{\bf W}_t=-{\bf H}^{\#}\Delta{\bf H}_t{\bf W}_t+\left({\bf H}^{\#}-{\bf H}^{\#}\Delta{\H}_t{\bf H}^{\#}+\left(\transconj{\H}\H\right)^{-1}\transconj{\left(\Delta\H_t\right)}\boldsymbol{\Pi}\right){\bf V}_2. 
$$
\subsection{DDST scheme}
The LS channel estimate is obtained by multiplying $\Y$ by $\transconj{\P}_d\left(\P_d\transconj{\bf P}_d\right)^{-1}$, thus giving:
$$
\widehat{\H}_d=\Y\transconj{\bf P}_d\left(\P_d\transconj{\bf P}_d\right)^{-1}=\H+\V\transconj{\bf P}_d\left(\P_d\transconj{\bf P}_d\right)^{-1}=\H+\Delta{\bf H}_d
$$
where $\Delta{\bf H}_d={\bf V}\transconj{\bf P}_d\left(\P_d\transconj{\bf P}_d\right)^{-1}$ denotes the channel estimation error matrix for the DDST scheme. 
As aforementioned above in assumption \ref{ass:A5}, the optimal training matrix that minimizes the MSE should satisfy:
$$
{\bf P}_d\transconj{\bf P}_d=N\sigma_{P_d}^2\I_K.
$$
The MSE is thus given by:
$$
{\rm MSE}_d=M\sigma_v^2\tr\left({\bf P}_d\transconj{\bf P}_d\right)^{-1}=\frac{KM\sigma_v^2}{N\sigma_{P_d}^2}
$$
For the DDST scheme, we consider a zero-forcing receiver which, prior to inverting the channel matrix, cancels the contribution of the training symbols by right multiplying $\Y$ by $\left(\I-\J\right)$, where
$$
{\bf Y}={\bf H}{\bf W}_d\left(\I_N-{\bf J}\right),
$$
the matrix ${\bf W}_d$ being the sent data matrix.
Thus, the zero-forcing estimate of ${\bf W}_d$ is given by:
\begin{align*}
\widehat{\bf W}_d&=\left(\widehat{\bf H}_d\right)^{\#}{\Y}\left(\I-\J\right)\\
&=\left({\bf H}^{\#}-{\bf H}^{\#}\Delta{\bf H}_d{\bf H}^{\#}+\left(\transconj{\bf H}\H\right)^{-1}\transconj{\Delta{\bf H}}_d\boldsymbol{\Pi}\right)\left({\bf HW}_d\left(\I-\J\right)+{\bf V}\left({\bf I}-{\bf J}\right)\right)\\
&=\left({\bf I}-{\bf H}^{\#}\Delta{\bf H}_d\right){\bf W}_d\left({\I}-\J\right)+\left({\bf H}^{\#}-{\bf H}^{\#}\Delta{\bf H}_d{\bf H}^{\#}\right)\V\left(\I-\J\right)+\left(\transconj{\bf H}\H\right)^{-1}\transconj{\Delta\H}_d\boldsymbol{\Pi}{\V}\left(\I-\J\right)\\
&=\W_d\left(\I-\J\right)-{\bf H}^{\#}\Delta\H_d{\W}_d\left(\I-\J\right)+\left(\H^{\#}-\H^{\#}\Delta\H_d\H^{\#}\right)\V\left(\I-\J\right)+\left(\transconj{\bf H}\H\right)^{-1}\transconj{\Delta\H}_d\boldsymbol{\Pi}\V\left(\I-\J\right)\\
&=\W_d+\left(-\W_d\J-{\bf H}^{\#}\Delta\H_d\W\left(\I-\J\right)+\left(\H^{\#}-\H^{\#}\Delta\H_d\H^{\#}\right){\V}(\I-\J)\right)+\left(\transconj{\H}\H\right)^{-1}\transconj{\Delta\H}_d\boldsymbol{\Pi}{\V}\left(\I-\J\right)
\end{align*}
Hence:
$$
\Delta{\bf W}_d=-{\bf W}_d{\bf J}-{\bf H}^{\#}\Delta{\bf H}_d\W_d\left(\I-\J\right)+\left({\bf H}^{\#}-{\bf H}^{\#}\Delta{\bf H}_d{\bf H}^{\#}\right){\bf V}\left(\I-\J\right)+\left(\transconj{\H}\H\right)^{-1}\transconj{\Delta\H}_d\boldsymbol{\Pi}{\V}\left(\I-\J\right).
$$
\section{Bit error rate performance}
\label{sec:approximation}
\subsection{TDMT scheme}
In order to evaluate the bit error rate performance, we need to evaluate the asymptotic behaviour of the post-processing noise observed at each entry of matrix $\Delta{\bf W}_t$. Using the 'characteristic function' approach, we can prove that conditioned on the channel matrix, the noise behaves asymptotically like a Gaussian random variable. This result is stated in the following theorem but its proof is postponed in appendix \ref{appendix:TDMT}.
\begin{theorem}
Under assumptions \ref{ass:A1}, \ref{ass:A2}, \ref{ass:A3}, \ref{ass:A5bis} and under the asymptotic regime defined as:
$$
M,K,N_1,N_2\to +\infty \hspace{0.1cm}\textnormal{with}\hspace{0.1cm} \frac{K}{N_1+N_2}\to c_1, 0<c_1<1 \hspace{0.1cm}\frac{M}{K}\to c_2>1 \hspace{0.1cm}\textnormal{and}\hspace{0.1cm}\frac{N_2}{N_1}\to r
$$
the post-processing noise experienced by the $i$-th antenna at each time $k$, $\Delta{\bf W}_t(i,k)$, for the TDMT scheme behaves in the asymptotic regime as a Gaussian random variable:\\
\begin{equation*}
\mathbb{E}\left[e^{\jmath\Re\left( z^*\Delta{\bf W}_t(i,k)\right)}\right]-e^{-\frac{\sigma_{w_t}^2\delta_t\left[\left({\bf H}^{\mbox{\tiny H}}{\bf H}\right)^{-1}\right]_{i,i}|z|^2}{4}}\xrightarrow[K\to+\infty]{} 0
\end{equation*}
where 
$$
\delta_t=c_1(1+r)\frac{\sigma_v^2}{\sigma_{P_t}^2}+\frac{\sigma_v^2}{\sigma_{w_t}^2}+\frac{c_1(1+r)(c_2+1)\sigma_v^4}{\sigma_{w_t}^2\sigma_{P_t}^2(c_2-1)}.
$$
\label{th:TDMT}
and  $K\to +\infty$ refers to this asymptotic regime.
\end{theorem}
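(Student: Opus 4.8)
The plan is to argue conditionally on $\H$ and to establish the stated conditional CLT through the characteristic-function approach, carried out in two nested steps. Using Assumption~\ref{ass:A5bis} I would first write $\Delta\H_t=(N_1\sigma_{P_t}^2)^{-1}\V_1\transconj{\P}_t$, so that the $(i,k)$ entry of the post-processing noise obtained above decomposes as
\begin{equation*}
\Delta\W_t(i,k)=T_1+\bigl[\B\,{\bf v}_{2,k}\bigr]_i,\qquad
T_1:=-\frac{1}{N_1\sigma_{P_t}^2}\bigl[\H^{\#}\V_1\transconj{\P}_t\w_{t,k}\bigr]_i,
\end{equation*}
where ${\bf v}_{2,k},\w_{t,k}$ denote the $k$-th columns of $\V_2,\W_t$, $\B=\H^{\#}-\H^{\#}\Delta\H_t\H^{\#}+\invhh\transconj{(\Delta\H_t)}\boldsymbol{\Pi}$, the matrices $\V_1,\V_2,\W_t$ are mutually independent, and $\sum_m|(\H^{\#})_{i,m}|^2=\left[\invhh\right]_{i,i}$.

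For the inner step I would condition additionally on $(\V_1,\W_t)$; then $[\B{\bf v}_{2,k}]_i=\sum_m\B_{i,m}({\bf v}_{2,k})_m$ is a linear form in the i.i.d.\ entries of ${\bf v}_{2,k}$ with deterministic coefficients while $T_1$ is a constant. On an event of probability tending to one the Lindeberg ratio $\max_m|\B_{i,m}|^2/\sum_m|\B_{i,m}|^2$ vanishes — because the rows of $\H^{\#}$ are delocalized for Gaussian $\H$ and $\Delta\H_t$ has vanishing operator norm — so the conditional characteristic function converges to $e^{\jmath\Re(z^*T_1)}\exp\!\bigl(-\tfrac14\sigma_v^2\bigl(\sum_m|\B_{i,m}|^2\bigr)|z|^2\bigr)$. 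Since $\sum_m|\B_{i,m}|^2$ is a constant-plus-linear-plus-quadratic functional of the Gaussian $\V_1$ whose non-constant part has variance $O(1/N_1)$, it converges in $L^1$ to $\beta:=\mathbb{E}\bigl[\sum_m|\B_{i,m}|^2\bigm|\H\bigr]$, and after replacing it by $\beta$ and integrating over $(\V_1,\W_t)$ one gets
\begin{equation*}
\mathbb{E}\!\left[e^{\jmath\Re(z^*\Delta\W_t(i,k))}\bigm|\H\right]
= e^{-\sigma_v^2\beta|z|^2/4}\;\mathbb{E}\!\left[e^{\jmath\Re(z^*T_1)}\bigm|\H\right]+o(1).
\end{equation*}

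For the outer step I would condition on $\W_t$ and view $T_1$ as a linear form in the entries of $\V_1$; a second Lindeberg CLT, using delocalization of the $i$-th row of $\H^{\#}$ and of the pilot columns together with $\|\transconj{\P}_t\w_{t,k}\|^2=N_1\sigma_{P_t}^2\|\w_{t,k}\|^2=N_1\sigma_{P_t}^2K\sigma_{w_t}^2(1+o(1))$ by the law of large numbers for the bounded data, gives $\mathbb{E}[e^{\jmath\Re(z^*T_1)}\mid\H]\to\exp(-\tfrac14\sigma_t^2|z|^2)$ with $\sigma_t^2=\tfrac{K}{N_1}\tfrac{\sigma_v^2\sigma_{w_t}^2}{\sigma_{P_t}^2}\left[\invhh\right]_{i,i}$, hence the conditional characteristic function tends to $\exp\!\bigl(-\tfrac14(\sigma_v^2\beta+\sigma_t^2)|z|^2\bigr)$. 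I would then evaluate $\beta$: when $\sum_m|\B_{i,m}|^2$ is expanded, the cross term between the two $\Delta\H_t$-contributions is \emph{identically} zero because $\H^{\#}\boldsymbol{\Pi}={\bf 0}$, the terms linear in $\Delta\H_t$ vanish in expectation, and $\mathbb{E}[\V_1{\bf M}\transconj{\V_1}]=\sigma_v^2\tr({\bf M})\I$ gives
\begin{equation*}
\sigma_v^2\beta=\sigma_v^2\left[\invhh\right]_{i,i}
+\frac{\sigma_v^4\,\tr(\invhh)}{N_1\sigma_{P_t}^2}\left[\invhh\right]_{i,i}
+\frac{\sigma_v^4(M-K)}{N_1\sigma_{P_t}^2}\left[\left(\transconj{\H}\H\right)^{-2}\right]_{i,i}.
\end{equation*}
Finally, Marchenko--Pastur asymptotics for the Wishart matrix $\transconj{\H}\H$ yield $\tfrac1K\tr(\invhh)\asto\tfrac1{c_2-1}$, and since $\transconj{\H}\H$ is unitarily invariant in law its resolvent diagonal entries self-average to the normalized trace, so $\left[\left(\transconj{\H}\H\right)^{-2}\right]_{i,i}\big/\left[\invhh\right]_{i,i}\asto c_2/(c_2-1)^2$; combining these with $\tfrac{K}{N_1}\to c_1(1+r)$ and $\tfrac{M-K}{N_1}\to c_1(1+r)(c_2-1)$, a direct substitution shows $\sigma_v^2\beta+\sigma_t^2-\sigma_{w_t}^2\delta_t\left[\invhh\right]_{i,i}\to0$, which is the assertion.

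The hard part will be twofold. First, making the two Lindeberg CLTs rigorous requires verifying, uniformly on events of probability tending to one, the negligibility (delocalization) of the coefficient vectors built from the $i$-th rows of $\B$ and $\H^{\#}$ and from $\transconj{\P}_t\w_{t,k}$; this rests on the Gaussianity of $\V_1,\V_2,\H$, the boundedness of the data symbols, and operator-norm control of $\Delta\H_t$, and is technical but routine. Second, and the substantive point, is the random-matrix analysis of $\left[\invhh\right]_{i,i}$, $\left[\left(\transconj{\H}\H\right)^{-2}\right]_{i,i}$ and $\tr(\invhh)$ for the non-square Gaussian $\H$ with $M/K\to c_2>1$: this is where the constants of $\delta_t$ are produced — in particular the factor $c_2+1$ and the pole at $c_2=1$ — and where the hypothesis $c_2>1$ is indispensable.
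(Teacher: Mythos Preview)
Your proposal is correct and follows essentially the same route as the paper's proof: a two-stage characteristic-function argument (first conditioning on $(\H,\V_1,\W_t)$, then on $(\H,\W_t)$), concentration of the conditional variance to a deterministic equivalent, and Wishart/Marchenko--Pastur asymptotics to extract the constants in $\delta_t$. The only notable differences are cosmetic: the paper exploits the exact Gaussianity of $\V_2$ (and, implicitly, of $\V_1$) rather than invoking a Lindeberg CLT, and it treats $\bigl[(\transconj{\H}\H)^{-2}\bigr]_{i,i}$ via a short Schur-complement identity showing $\bigl[(\transconj{\H}\H)^{-2}\bigr]_{i,i}-\tfrac{c_2}{c_2-1}\bigl(\bigl[(\transconj{\H}\H)^{-1}\bigr]_{i,i}\bigr)^2\to0$, which is a bit more precise than the ``unitary invariance $\Rightarrow$ self-averaging'' heuristic you invoke.
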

\begin{remark}
	Note that as compared to the results in \cite{spawc2006}, our results make appear a new additive term of order $\sigma_v^4$. 
\end{remark}
The gaussianity of the post-processing noise being verified in the asymptotic case, we can derive the bit error rate for QPSK constellation and Gray encoding as \cite{book.proakis95}:
\begin{equation}
\mathrm{BER}=\mathbb{E}Q(\sqrt{x})
\label{eq:ber}
\end{equation}
where the expectation is taken with respect to the probability density function of the post processing SNR at the $i$-th branch  defined as:
$$
\gamma_t=\frac{1}{\delta_t\left[\left(\transconj{\bf H}{\bf H}\right)^{-1}\right]_{i,i}}.
$$
From \cite{gore} and \cite{winters}, we know that $\frac{1}{\left[\left(\transconj{\bf H}{\bf H}\right)^{-1}\right]_{i,i}}$ is a weighted chi-square distributed random variable with $2(M-K+1)$ degrees of freedom, whose density function is given by:
$$
f(x)=\frac{K^{M-K+1}x^{M-K}e^{-Kx}}{(M-K)!}{\bf 1}_{\left[0,+\infty\right[},
$$
where ${\bf 1}_{\left[0,+\infty\right[}$ is the indicator function corresponding to the interval $\left[0,+\infty\right[$.
Hence, the probability density function of $\gamma_t$ is given by:
\begin{equation}
f_{\gamma_t}(x)=\frac{(K\delta_t)^{M-K+1}x^{M-K}\exp(-K\delta_t x)}{(M-K)!}{\bf 1}_{\left[0,+\infty\right[}
\label{eq:pdf}
\end{equation}
Plugging \eqref{eq:pdf} into \eqref{eq:ber}, we get:
\begin{equation}
\mathrm{BER}_t=\frac{(K\delta_t)^{M-K+1}}{(M-K)!}\int_{0}^{+\infty}x^{M-K}\exp(-K\delta_t x) Q(\sqrt{x}) dx
\label{eq:ber_n}
\end{equation}
To compute \eqref{eq:ber_n}, we use the following integral function:
\begin{equation}
J(m,a,b)=\frac{a^m}{\Gamma(m)}\int_0^{+\infty} \exp(-ax)x^{m-1}Q(\sqrt{bx}) dx.
\label{eq:integral}
\end{equation}
The BER is therefore equal to:
\begin{equation}
\mathrm{BER}=J(M-K+1,K\delta_t,1).
\label{eq:BER_J}
\end{equation}
The integral in \eqref{eq:integral} has been shown to have,  for $c>0$  the following closed-form expression, \cite{thomas}:
$$
J(m,a,b)=\frac{\sqrt{c/\pi}\Gamma(m+\frac{1}{2})}{2(1+c)^{m+\frac{1}{2}}\Gamma(m+1)}{}_2F_1(1,m+\frac 1 2;m+1;\frac{1}{1+c}), \quad c=\frac{b}{2a}
$$
where ${}_2F_1(p,q;n,z)$ is the Gauss hyper-geometric function \cite{book.gradshteyn07}.
If $c=0$ equivalently $b=0$, it is easy to note that $J(m,a,0)$ is equal to $\frac{1}{2}$.
When $m$ is restricted to positive integer values, the above equation can be further simplified to \cite{iscas.xu05}:
\begin{equation}
J(m,a,b)=\frac{1}{2}\left[1-\mu\sum_{k=0}^{m-1}{2k \choose k} \left(\frac{1-\mu^2}{4}\right)^k\right]
\label{eq:J}
\end{equation}
where $\mu=\sqrt{\frac{c}{1+c}}$.
Plugging \eqref{eq:J} into \eqref{eq:BER_J}, we get:
\begin{empheq}[box=\fbox]{align}
\mathrm{BER}_t=\frac{1}{2}\left[1-\mu_t\sum_{k=0}^{M-K}{2k \choose k}\left(\frac{1-\mu_t^2}{4}\right)^k\right]
\label{eq:BER_t}
\end{empheq}
where $\mu_t=\sqrt{\frac{1}{2K\delta_t+1}}$.
\subsection{DDST scheme}
Unlike the TDMT scheme, the asymptotic distribution of entries of the post-processing noise matrix is not Gaussian. Actually, we prove that:
\begin{theorem}
Under assumptions \ref{ass:A4}, \ref{ass:A5}, and under the asymptotic regime defined as:
$$
\frac{K}{N}\to c_1, 0<c_1<1 \hspace{0.1cm}\textnormal{with}\hspace{0.1cm}\frac{M}{K}\to c_2 >1
$$
the post-processing noise experienced by the $i$-th antenna at each time $k$ behaves asymptotically as a Gaussian mixture random variable, i.e,
\begin{align}
\mathbb{E}\left[\exp\left(\jmath\Re\left(z^*\left[\Delta{\bf W}_d\right]_{i,k}\right)\right)\right]-\sum_{i=1}^\mathcal{Q}p_i\exp\left(\jmath\Re\left(z^*\alpha_i\right)\right)\exp\left(-\frac{|z|^2\delta_d\sigma_{w_d}^2\left[\left(\transconj{\bf H}{\bf H}\right)^{-1}\right]_{i,i}}{4}\right)\xrightarrow[K\to\infty]{} 0
\end{align}
where :
\begin{align}
\delta_d=(1-c_1)\left(\frac{c_1\sigma_v^2}{\sigma_{P_d}^2}+\frac{\sigma_v^2}{\sigma_{w_d}^2}+\frac{c_1\sigma_v^4(c_2+1)}{(c_2-1)\sigma_{P_d}^2\sigma_{w_d}^2}\right)
\end{align}
\label{theorem:DDST}
and $\mathcal{Q}$ is the cardinal of the set of all possible values of $\left[\overline{\bf W}\right]_{i,k}=c_1\sum_{k=1}^{\frac{1}{c_1}}\left[{\bf W}_d\right]_{i,k}$, and $p_i$ is the probability that $\left[\overline{\bf W}\right]_{i,k}$ takes the value $\alpha_i$.

We can also prove that conditioning on the fact that $\left[{\bf W}\right]_{i,k}=\epsilon_1\sqrt{\frac{\sigma_{w_d}^2}{2}}+\jmath\epsilon_2\sqrt{\frac{\sigma_{w_d}^2}{2}}$ where $\epsilon_{1}=\pm1$ and $\epsilon_2=\pm1$ the post-processing noise satisfies:
\begin{small}
\begin{gather}
\mathbb{E}\left[\exp\left(\jmath\Re\left(z^*\left[\Delta {\bf W}_d\right]_{i,k}\right)\right)|\left[{\bf W}\right]_{i,k}=(\epsilon_1+\jmath\epsilon_2)\sqrt{\frac{\sigma_{w_d}^2}{2}}\right]-\sum_{i=1}^{\mathcal{Q}^{'}}p_i^{'}\exp\left(
\jmath\Re\left(z^*\left(-c_1\left(\epsilon_1+\jmath\epsilon_2\right)\sqrt{\frac{\sigma_{w_d}^2}{2}}+\alpha_i^{'}
\right)\right)\right)\\
\times\exp\left(-\frac{|z|^2\delta_d\sigma_{w_d}^2\left[\left(\transconj{\bf H}{\bf H}\right)^{-1}\right]_{i,i}}{4}\right)\xrightarrow[K\to\infty]{} 0
\end{gather}
\end{small}
where $\mathcal{Q}^{'}$ is the cardinal of the set of all possible values $\overline{\overline{W}}_i=c_1\sum_{l=1}^{\frac{1}{c_1}-1}\left[{\bf W}\right]_{i,l}$, and $p_i^{'}$ is the probability that $\overline{\overline{W}}_i$ takes the value $\alpha_i^{'}$.
\end{theorem}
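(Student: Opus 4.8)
The plan is to isolate the data-dependent \emph{deterministic} contribution of $\left[\Delta{\bf W}_d\right]_{i,k}$ and to treat the rest exactly as in the TDMT case. From the expression for $\Delta{\bf W}_d$ obtained in Section~\ref{sec:channel_estimation}, write
$$
\left[\Delta{\bf W}_d\right]_{i,k}=-\left[{\bf W}_d{\bf J}\right]_{i,k}+\left[\Gamma\right]_{i,k},
$$
where $\Gamma$ gathers the three remaining terms $-\H^{\#}\Delta\H_d\W_d(\I-\J)$, $\left(\H^{\#}-\H^{\#}\Delta\H_d\H^{\#}\right){\bf V}(\I-\J)$ and $\left(\transconj{\H}\H\right)^{-1}\transconj{(\Delta\H_d)}\boldsymbol{\Pi}{\bf V}(\I-\J)$. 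Since $\J=\frac{K}{N}{\bf 1}_{N/K}\otimes\I_K$, writing $k=qK+k'$ with $0\le k'<K$ gives $\left[{\bf W}_d{\bf J}\right]_{i,k}=c_1\sum_{q'=0}^{1/c_1-1}\left[{\bf W}_d\right]_{i,q'K+k'}$, which is exactly $\left[\overline{\bf W}\right]_{i,k}$ in the notation of the statement; since the data alphabet is finite, this is a discrete random variable taking the values $\alpha_1,\dots,\alpha_{\mathcal{Q}}$ with probabilities $p_1,\dots,p_{\mathcal{Q}}$, and it is independent of $\H$.

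First I would condition on $\H$ and decompose according to the value of $\left[\overline{\bf W}\right]_{i,k}$, using the tower property:
$$
\mathbb{E}\!\left[\exp\!\big(\jmath\Re(z^*\left[\Delta{\bf W}_d\right]_{i,k})\big)\,\middle|\,\H\right]=\sum_{q=1}^{\mathcal{Q}}p_q\,\exp\!\big(-\jmath\Re(z^*\alpha_q)\big)\,\mathbb{E}\!\left[\exp\!\big(\jmath\Re(z^*\left[\Gamma\right]_{i,k})\big)\,\middle|\,\H,\left[\overline{\bf W}\right]_{i,k}=\alpha_q\right].
$$
(The sign in front of $\alpha_q$ is immaterial because the QPSK law of $\left[\overline{\bf W}\right]_{i,k}$ is symmetric.) The heart of the argument is to show that, in the stated regime, the inner conditional characteristic function converges to $\exp\!\big(-\tfrac14|z|^2\delta_d\sigma_{w_d}^2[(\transconj{\H}\H)^{-1}]_{i,i}\big)$ and that this limit is insensitive to the extra conditioning: fixing the $1/c_1$ symbols that make up $\left[\overline{\bf W}\right]_{i,k}$ affects only one of the $K$ summands of the bilinear form in $\Gamma$ and a vanishing fraction of the $N$ columns, hence cannot change the limiting second moment. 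For this one replays the characteristic-function/Lindeberg computation already used for Theorem~\ref{th:TDMT} in Appendix~\ref{appendix:TDMT}: recalling $\Delta\H_d=\frac{1}{N\sigma_{P_d}^2}{\bf V}\transconj{\bf P}_d$, the quantity $\left[\Gamma\right]_{i,k}$ is a linear form in the Gaussian noise ${\bf V}$ plus quadratic/bilinear forms in $({\bf V},{\bf W}_d)$; one shows its cumulants of order $\ge3$ vanish and one evaluates its variance termwise. The pure-noise term $\H^{\#}{\bf V}(\I-\J)$ produces the $(1-c_1)\frac{\sigma_v^2}{\sigma_{w_d}^2}$ part of $\delta_d$, the term $-\H^{\#}\Delta\H_d\W_d(\I-\J)$ produces the $(1-c_1)\frac{c_1\sigma_v^2}{\sigma_{P_d}^2}$ part (using $\mathbb{E}[\Delta\H_d\transconj{(\Delta\H_d)}]=\tfrac{c_1\sigma_v^2}{\sigma_{P_d}^2}\I_M$), and the two cross terms involving $\Delta\H_d$ and ${\bf V}$ jointly produce the $\sigma_v^4$ part, the factor $\tfrac{c_2+1}{c_2-1}$ arising from the random-matrix limits $\tfrac1K\tr\boldsymbol{\Pi}\to c_2-1$ and $\tfrac1K\tr(\transconj{\H}\H)^{-1}\to\tfrac1{c_2-1}$; the common factor $1-c_1=\tfrac1N\tr(\I-\J)$ is produced by right-multiplication by the rank-$(N-K)$ projector $\I-\J$. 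Collecting these gives the announced Gaussian-mixture limit, with the concentration of $[\H^{\#}\transconj{(\H^{\#})}]_{i,i}$ on $[(\transconj{\H}\H)^{-1}]_{i,i}$ making the convergence hold almost surely over $\H$.

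For the second, refined statement I would enlarge the conditioning to include $\left[{\bf W}_d\right]_{i,k}$ itself. Because the block of $\left[\overline{\bf W}\right]_{i,k}$ containing column $k$ contributes the single term $c_1\left[{\bf W}_d\right]_{i,k}$, one splits $\left[\overline{\bf W}\right]_{i,k}=c_1\left[{\bf W}_d\right]_{i,k}+\overline{\overline{W}}_i$ with $\overline{\overline{W}}_i=c_1\sum_{l=1}^{1/c_1-1}\left[{\bf W}\right]_{i,l}$; conditioning on $\left[{\bf W}_d\right]_{i,k}=(\epsilon_1+\jmath\epsilon_2)\sqrt{\sigma_{w_d}^2/2}$ then shifts every mixture atom by $-c_1(\epsilon_1+\jmath\epsilon_2)\sqrt{\sigma_{w_d}^2/2}$ and replaces $(\alpha_q,p_q)$ by the law $(\alpha_q',p_q')$ of $\overline{\overline{W}}_i$, while the Gaussian component is unchanged by the very same argument as above (fixing one more symbol cannot alter the limiting variance).

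The step I expect to be the main obstacle is establishing the conditional asymptotic normality of $\left[\Gamma\right]_{i,k}$ with the precise variance $\tfrac12\delta_d\sigma_{w_d}^2[(\transconj{\H}\H)^{-1}]_{i,i}$ per real component: this requires (i) controlling the error of the Taylor linearisation of $(\widehat{\bf H}_d)^{\#}$ used to derive the expression for $\Delta{\bf W}_d$, (ii) a Lindeberg/Lyapunov bound forcing the higher cumulants to vanish despite the dependence introduced by $\boldsymbol{\Pi}$ and $\I-\J$, and (iii) the random-matrix concentration results needed to turn the normalized traces involving $(\transconj{\H}\H)^{-1}$ and $\boldsymbol{\Pi}$ into their deterministic limits. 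Everything else is the bookkeeping already carried out for Theorem~\ref{th:TDMT}.
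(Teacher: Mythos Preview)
Your high-level decomposition $\left[\Delta{\bf W}_d\right]_{i,k}=-\left[{\bf W}_d{\bf J}\right]_{i,k}+\left[\Gamma\right]_{i,k}$ and the bookkeeping of which term contributes which piece of $\delta_d$ are both correct and match the paper. The difference lies in how you propose to establish the (conditional) asymptotic Gaussianity of $\left[\Gamma\right]_{i,k}$.

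You suggest a Lindeberg/cumulant route and flag it as the main obstacle. The paper avoids this obstacle entirely through an observation you do not make: writing ${\bf v}_1={\bf V}({\bf e}_j-{\bf J}_j)$ and ${\bf V}_2={\bf V}\transconj{\bf P}_d({\bf P}_d\transconj{\bf P}_d)^{-1}$, these two Gaussian objects are \emph{independent}, because $({\bf I}_N-{\bf J})\transconj{\bf P}_d={\bf 0}$ forces $\mathbb{E}\big[{\bf v}_1\,\mathrm{vec}({\bf V}_2)^{\mathrm{H}}\big]=0$. This is the DDST analogue of the trivial independence of ${\bf V}_1$ and ${\bf V}_2$ in the TDMT proof, and it is precisely what allows one to ``replay Appendix~\ref{appendix:TDMT}'': condition on $({\bf H},{\bf W},{\bf V}_2)$ so that $\left[\Gamma\right]_{i,k}$ is \emph{exactly} Gaussian in ${\bf v}_1$, with a conditional variance that one then shows concentrates (via Corollary~\ref{corollary:quadratic} and Lemma~\ref{lemma:new}); then remove the conditioning on ${\bf V}_2$ (another exact Gaussian step, producing $\tilde{\sigma}_{m_d,N}^2$); finally remove the conditioning on ${\bf W}$, which yields the discrete mixture over the law of $\tilde{\bf w}_i{\bf J}_j$. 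No CLT is needed at any stage. Without this independence, your instruction to replay the TDMT argument does not go through, because in the TDMT case the two noise blocks live in disjoint time slots, whereas here both $\Delta{\bf H}_d$ and ${\bf V}({\bf I}-{\bf J})$ are built from the \emph{same} noise matrix ${\bf V}$; it is the pilot/data orthogonality of the DDST design that restores the independence.

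Two smaller remarks. First, your worry about controlling the Taylor linearisation of $(\widehat{\bf H}_d)^{\#}$ is not part of the theorem as stated: the paper (and the statement) works with the already-linearised expression for $\Delta{\bf W}_d$. Second, the paper does not condition on the event $\{\left[\overline{\bf W}\right]_{i,k}=\alpha_q\}$ and then argue that fixing $1/c_1$ symbols is asymptotically negligible; it keeps the full conditioning on ${\bf W}$ until the very last step and only then integrates out $\tilde{\bf w}_i{\bf J}_j$, which is cleaner and avoids your ``vanishing fraction'' argument altogether.
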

\begin{proof}
See Appendix \ref{appendix:DDST}.
\end{proof}
The assumption of the gaussianity of the post processing noise has been always assumed. For time division multiplexed training, this assumption is well-founded, since the post-processing noise, converges to a Gaussian distribution in the asymptotic regime, (see theorem \ref{th:TDMT}). 

In the superimposed training case, the distortion caused by the presence of data symbols affects the distribution of the post-processing noise which becomes asymptotically Gaussian mixture distributed. To assess the system performance in this particular case, we will start from the elementary definition of the bit error rate. Let $\Delta{\bf W}_{i,k}$ denotes the post processing noise experienced at the $i$-th antenna at time $k$ (we omit the subscript $d$ for ease of notations).
As it has been previously shown, $\Delta{\bf W}_{i,k}$ behaves as a Gaussian mixture random variable.
Let $\sigma_{d}^2$ be the asymptotic variance of $\Delta{\bf W}_{i,k}$, i.e, $\sigma_{d}^2=\sigma_{w_d}^2\delta_d\left[\left(\transconj{\bf H}{\bf H}\right)^{-1}\right]_{i,i}$. 

Using the symmetry of the transmitted data, the BER expression at the $i$th branch, under QPSK constellation and for a given channel realization is given by:
\begin{align*}
\mathrm{BER}_i&=\frac{1}{2}\mathbb{P}\left[\Re\left(\widehat{\bf W}_{i,k}\right)> 0 | \Re\left({\bf W}_{i,k}\right)=-\sqrt{\frac{\sigma_{w}^2}{2}}\right]+\frac{1}{2}\mathbb{P}\left[\Re\left(\widehat{\bf W}_{i,k}\right)< 0 | \Re\left({\bf W}_{i,k}\right)=\sqrt{\frac{\sigma_{w}^2}{2}}\right]\\
&=\frac{1}{2}\mathbb{P}\left[\Re\left(\Delta{\bf W}_{i,k}\right)>\sqrt{\frac{\sigma_{w}^2}{2}}\right]+\frac{1}{2}\mathbb{P}\left[\Re\left(\Delta{\bf W}_{i,k}\right)<-\sqrt{\frac{\sigma_{w}^2}{2}}\right]
\end{align*}
In the asymptotic regime, $\left(\Delta{\bf W}_{i,k}\right)$ converges to a mixed Gaussian distribution with the probability density function:
$$
f(x)=\frac{1}{\sqrt{\pi\sigma_{d}^2}}\sum_{s=1}^{\sqrt{\mathcal{Q}^{'}}}p_s\exp(-\frac{(x+c_1\epsilon\sqrt{\frac{\sigma_{w_d}^2}{2}}-\Re(\alpha_s))^2}{\sigma_{d}^2})
$$
Hence, conditioned on the channel, the asymptotic bit error rate can be approximated by:
\begin{align*}
\mathrm{BER}_{i,d}&=\frac{1}{2}\frac{1}{\sqrt{\pi\sigma_{w_d}^2}}\int_{\sqrt{\frac{\sigma_{w_d}^2}{2}}}^{+\infty}\sum_{s=1}^{\sqrt{\mathcal{Q}^{'}}}p_s^{'}\exp\left(-\frac{(x-c_1\sqrt{\frac{\sigma_{w_d}^2}{2}}-\Re(\alpha_s))^2}{\sigma_{w_d}^2}\right)dx \\
&+\frac{1}{2}\frac{1}{\sqrt{\pi\sigma_{w_d}^2}}\int_{-\infty}^{-\sqrt{\frac{\sigma_{w_d}^2}{2}}}\sum_{s=1}^{\sqrt{\mathcal{Q}^{'}}}p_s^{'}\exp\left(-\frac{(x+c_1\sqrt{\frac{\sigma_{w_d}^2}{2}}-\Re(\alpha_s))^2}{\sigma_{w_d}^2}\right)dx
\end{align*}
Finally, the proposed approximation of the BER  can be obtained by averaging with respect to the channel realization ${\bf H}$, thus giving:
\begin{align*}
\mathrm{BER}_{d}&=\mathbb{E}\frac{1}{2}\sum_{s=1}^{\sqrt{\mathcal{Q}^{'}}}p_s^{'}Q\left(\sqrt{\frac{\sigma_{w}^2}{\sigma_{w_d}^2}}(1-c_1)-\frac{\Re(\alpha_s)}{\sqrt{\frac{\sigma_{w_d}^2}{2}}}\right)+\frac{1}{2}\sum_{s=1}^{\sqrt{\mathcal{Q}^{'}}}p_s^{'}Q\left(\sqrt{\frac{\sigma_{w_d}^2}{\sigma_{w_d}^2}}(1-c_1)+\frac{\Re(\alpha_s)}{\sqrt{\frac{\sigma_{w_d}^2}{2}}}\right)\\
\end{align*}
For QPSK constellations, it can be shown that $\sqrt{\mathcal{Q}^{'}}=\frac{1}{c_1}$, where $\frac{1}{c_1}=\frac{N}{K}$ is assumed to be integer. Moreover, the set $\mathcal{S}$ of the values taken by $\Re(\alpha_s)$ can be given by:
$$
\mathcal{S}=\left\{\Re(\alpha_s)=c_1\sqrt{\frac{\sigma_{w_d}^2}{2}}(\frac{1}{c_1}-2s-1), s\in\left\{0,\cdots,\frac{1}{c_1}-1\right\}\right\}.
$$
with probability $p_s=\frac{{\frac{1}{c_1}-1 \choose s}}{2^{\frac{1}{c_1}-1}}$.

Let  $\gamma_d=\frac{\sigma_{w_d}^2}{\sigma_{d}}$ 
then, the BER expression becomes:
\begin{align}
\mathrm{BER}_{d}&=\mathbb{E}\sum_{s=0}^{\frac{1}{c_1}-1} \frac{{\frac{1}{c_1}-1\choose s}}{2^{\frac{1}{c_1}-1}}Q(2sc_1\sqrt{\gamma_d})
\label{eq:BER_n}
\end{align}
where the expectation is taken over the distribution of $\gamma_d$ given by:
$$
f_{\gamma_d}(x)=\frac{(K\delta_d)^{M-K+1}x^{M-K}}{(M-K)!}\exp(-K\delta_dx).
$$
The computation of the BER can be treated similarly to the TDMT scheme, thus leading to:
\begin{empheq}[box=\fbox]{align}
{\rm BER}_d&=\frac{1}{2^{\frac{1}{c_1}-1}}\sum_{s=0}^{\frac{1}{c_1}-1}{\frac{1}{c_1}-1 \choose s}J(M-K+1,K\delta_d,4s^2c_1^2)
\label{eq:BER_d}
\end{empheq}

\section{Optimal power allocation}
So far, we have provided approximations of the BER for the TDMT and DDST  schemes. As it has been previously shown, these expressions, depend on the power
allocated to data and training, in addition to other parameters. While the system has no control
over the noise power or the number of transmitting and receiving antennas, it still can optimize
the power allocation in such a way to minimize this performance index. Next, we provide, for the TDMT and DDST schemes, the optimal data and training power amounts that minimize the BER under the constraint of a constant total power.
\subsection{Optimal power allocation for the TDMT scheme}
Referring to the expressions of BER, we can easily see that the optimal amount of power allocated to data and pilot for the TDMT scheme is the one that minimizes $\delta_t$. Let $\tilde{c}_1=(1+r)c_1$, then 
minimizing $\delta_t$ with respect to $\sigma_{w_t}^2$ and $\sigma_{P_t}^2$ under the constraint that $N_1\sigma_{P_t}^2+N_2\sigma_{w_t}^2=(N_1+N_2)\sigma_T^2$ ($\sigma_T^2$ being the mean energy per symbol) results in the following lemma:
\begin{lemma}
The optimal power allocation minimizing the BER under
\begin{align}
\sigma_{w_t}^2&=\frac{(1+r)\sigma_T^2\sqrt{r\left((1+r)\sigma_T^2+\frac{\tilde{c}_1\sigma_v^2(c_2+1)}{c_2-1}\right)}}{r\left(\sqrt{r\left((1+r)\sigma_T^2+\frac{\tilde{c}_1\sigma_v^2(c_2+1)}{c_2-1}\right)}+\sqrt{\tilde{c}_1(\left((1+r)\sigma_T^2+\frac{r\sigma_v^2(c_2+1)}{c_2-1}\right)}\right)}\label{tmx_eq_power},\\
\sigma_{P_t}^2&=\frac{r(1+r)\sigma_T^2\sqrt{\tilde{c}_1\left((1+r)\sigma_T^2+\frac{r\sigma_v^2(c_2+1)}{c_2-1}\right)}}{r\left(\sqrt{r\left((1+r)\sigma_T^2+\frac{\tilde{c}_1\sigma_v^2(c_2+1)}{c_2-1}\right)}+\sqrt{\tilde{c}_1\left((1+r)\sigma_T^2+\frac{r\sigma_v^2(c_2+1)}{c_2-1}\right)}\right)}.\label{tmx_eq_pilot}
\end{align}
\end{lemma}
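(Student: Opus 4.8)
\emph{Proof plan.} Since the map $\delta\mapsto\mathrm{BER}_t$ defined through \eqref{eq:BER_t} (equivalently through $\mu_t=\sqrt{1/(2K\delta+1)}$) is strictly increasing, as already observed, minimizing the BER over the power split is the same as minimizing $\delta_t$. Dividing the energy constraint $N_1\sigma_{P_t}^2+N_2\sigma_{w_t}^2=(N_1+N_2)\sigma_T^2$ by $N_1$ and letting $N_2/N_1\to r$ turns it into the single linear relation $x+ry=(1+r)\sigma_T^2$, where I write $x:=\sigma_{P_t}^2$ and $y:=\sigma_{w_t}^2$. I would then rewrite
$$
\delta_t=\sigma_v^2\left(\frac{\tilde{c}_1}{x}+\frac{1}{y}+\frac{\tilde{c}_1\sigma_v^2(c_2+1)}{(c_2-1)\,xy}\right)=\frac{\sigma_v^2\,y+\tilde{c}_1\sigma_v^2\,x+\frac{\tilde{c}_1\sigma_v^4(c_2+1)}{c_2-1}}{xy},
$$
which displays $\delta_t$ as an affine function over a product and blows up as either $x\downarrow0$ or $y\downarrow0$; hence on the feasible open segment $\{x,y>0,\ x+ry=(1+r)\sigma_T^2\}$ an interior minimizer exists and obeys the Lagrange stationarity condition $r\,\partial_x\delta_t=\partial_y\delta_t$.

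Next I would eliminate $x=(1+r)\sigma_T^2-ry$ from the stationarity equation (which, after clearing denominators, reads $r\,y(\tilde{c}_1\sigma_v^2 y+d)=x(\sigma_v^2 x+d)$ with $d=\tilde{c}_1\sigma_v^4(c_2+1)/(c_2-1)$) and reduce it to a quadratic for $y$,
$$
r(\tilde{c}_1-r)\,y^2+2rA\,y-(1+r)\sigma_T^2 A=0,\qquad A:=(1+r)\sigma_T^2+\frac{\tilde{c}_1\sigma_v^2(c_2+1)}{c_2-1}.
$$
The crucial algebraic step is to recognize that the quantity under the resulting square root, $rA+(\tilde{c}_1-r)(1+r)\sigma_T^2$, collapses to $\tilde{c}_1 B$ with $B:=(1+r)\sigma_T^2+\frac{r\sigma_v^2(c_2+1)}{c_2-1}$, so that the admissible root is $y=\bigl(\sqrt{r\tilde{c}_1 AB}-rA\bigr)\big/\bigl(r(\tilde{c}_1-r)\bigr)$; rationalizing the numerator and using the companion identity $\tilde{c}_1 B-rA=(1+r)\sigma_T^2(\tilde{c}_1-r)$ cancels the factor $(\tilde{c}_1-r)$, and then writing $\sqrt{r\tilde{c}_1 AB}+rA=\sqrt{rA}\,(\sqrt{rA}+\sqrt{\tilde{c}_1 B})$ together with $A/\sqrt{rA}=\sqrt{rA}/r$ gives exactly \eqref{tmx_eq_power}. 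Substituting this $y$ back into $x=(1+r)\sigma_T^2-ry$ and simplifying the remaining fraction yields \eqref{tmx_eq_pilot}.

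Finally I would justify that this critical point is the minimizer rather than a saddle or maximum: the sign of the derivative of $\delta_t$ along the constraint at the two endpoints of the feasible segment changes from negative (near $y=0$) to positive (near $y=(1+r)\sigma_T^2/r$), and since the stationarity condition is quadratic there is a unique interior sign change, which must be a minimum; the degenerate case $\tilde{c}_1=r$, where the quadratic becomes linear, gives the same closed form, obtained directly or by continuity. I expect the main obstacle to be purely computational — performing the elimination and, above all, spotting the two identities $rA+(\tilde{c}_1-r)(1+r)\sigma_T^2=\tilde{c}_1 B$ and $\tilde{c}_1 B-rA=(1+r)\sigma_T^2(\tilde{c}_1-r)$ that make the radical factor and produce the symmetric final expressions, plus the bookkeeping needed to discard the spurious root and handle the sign of $\tilde{c}_1-r$.
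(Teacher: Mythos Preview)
Your proposal is correct and follows the same approach the paper indicates: reduce the BER minimization to the minimization of $\delta_t$ under the linear constraint $\sigma_{P_t}^2+r\sigma_{w_t}^2=(1+r)\sigma_T^2$, then solve the resulting one-variable optimization. The paper itself gives no proof beyond the remark that ``minimizing $\delta_t$ \ldots\ results in the following lemma,'' so your explicit derivation---the quadratic in $y$, the discriminant identity $rA+(\tilde c_1-r)(1+r)\sigma_T^2=\tilde c_1 B$, the rationalization via $\tilde c_1 B-rA=(1+r)\sigma_T^2(\tilde c_1-r)$, and the endpoint/sign argument ruling out the spurious root---supplies precisely the ``straightforward calculations'' the paper omits.
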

\subsection{Optimal power allocation for the DDST scheme}
For the DDST scheme, we can deduce from \eqref{eq:BER_n} that maximizing $\gamma_d$ leads to minimize the BER. To maximize $\gamma_d$, we need to optimize $\delta_d$ as a function of $\sigma_{w_d}^2$ and under the constraint that $\sigma_{P_d}^2+(1-c_1)\sigma_{w_d}^2=\sigma_T^2$. After straightforward calculations, we can find that the optimal values for $\sigma_{w_d}^2$ and $\sigma_{P_d}^2$ are given by:
\begin{lemma}
Under the data model, the optimal power allocation minimizing the BER under a total power constraint $\sigma_{T}^2$ is given by:
\begin{align}
\sigma_{w_d}^2&=\frac{\sqrt{(1-c_1)\left(\sigma_T^2+\frac{c_1(c_2+1)\sigma_v^2}{c_2-1}\right)}\sigma_T^2}{\left(1-c_1\right)\left(\sqrt{(1-c_1)\left(\sigma_T^2+\frac{c_1(c_2+1)\sigma_v^2}{c_2-1}\right)}+\sqrt{c_1\sigma_T^2+\frac{c_1(c_2+1)(1-c_1)\sigma_v^2}{c_2-1}}\right)},\label{eqpower}\\
\sigma_{P_d}^2&=\frac{\sqrt{c_1\sigma_T^2+\frac{c_1(c_2+1)(1-c_1)\sigma_v^2}{c_2-1}}\sigma_T^2}{\sqrt{(1-c_1)\left(\sigma_T^2+\frac{c_1(c_2+1)\sigma_v^2}{c_2-1}\right)}+\sqrt{c_1\sigma_T^2+\frac{c_1(c_2+1)(1-c_1)\sigma_v^2}{c_2-1}}}.\label{eqpilot}\\
\nonumber
\end{align}
\end{lemma}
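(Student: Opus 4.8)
The plan is to reduce the minimization of $\mathrm{BER}_d$ to a one‑dimensional optimization of $\delta_d$, and then to solve the resulting stationarity condition in closed form. The reduction is the one announced just before the statement: by \eqref{eq:BER_n}, $\mathrm{BER}_d=\mathbb E\sum_{s=0}^{1/c_1-1}\binom{1/c_1-1}{s}2^{-(1/c_1-1)}\,Q\!\left(2sc_1\sqrt{\gamma_d}\right)$, and the law of $\gamma_d$ used there is precisely the $\delta_d$-scaling of the weighted $\chi^2$ density of $1/[(\transconj{\H}\H)^{-1}]_{i,i}$, i.e. $\gamma_d$ is distributed as $\big(\delta_d\,[(\transconj{\H}\H)^{-1}]_{i,i}\big)^{-1}$. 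Since $Q$ is decreasing, every summand is, for each fixed $\H$, a decreasing function of $1/\delta_d$, and the channel factor $[(\transconj{\H}\H)^{-1}]_{i,i}$ is unaffected by how the power is split between data and pilots; hence minimizing $\mathrm{BER}_d$ subject to $\sigma_{P_d}^2+(1-c_1)\sigma_{w_d}^2=\sigma_T^2$ is equivalent to minimizing $\delta_d$ under the same constraint, and what remains is a calculus exercise.

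Next I would eliminate the constraint. Writing $\kappa\eqdef\frac{c_2+1}{c_2-1}$ and putting the three terms of $\delta_d$ over the common denominator $\sigma_{P_d}^2\sigma_{w_d}^2$ gives $\delta_d=(1-c_1)\sigma_v^2\,\dfrac{c_1\big(\sigma_{w_d}^2+\kappa\sigma_v^2\big)+\sigma_{P_d}^2}{\sigma_{P_d}^2\sigma_{w_d}^2}$. Substituting $\sigma_{P_d}^2=\sigma_T^2-(1-c_1)\sigma_{w_d}^2$ and setting $w\eqdef\sigma_{w_d}^2\in\big(0,\sigma_T^2/(1-c_1)\big)$, the numerator collapses to $(2c_1-1)w+R$ with $R\eqdef\sigma_T^2+\frac{c_1(c_2+1)\sigma_v^2}{c_2-1}$, so $\delta_d(w)=(1-c_1)\sigma_v^2\,\dfrac{(2c_1-1)w+R}{w\big(\sigma_T^2-(1-c_1)w\big)}$. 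Differentiating and clearing the squared denominator, $\delta_d'(w)=0$ reduces to the quadratic $(1-c_1)(2c_1-1)\,w^2+2(1-c_1)R\,w-R\,\sigma_T^2=0$.

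Then I would solve this quadratic and simplify. With $A\eqdef(1-c_1)R$ and $B\eqdef A+(2c_1-1)\sigma_T^2=c_1\sigma_T^2+\frac{c_1(1-c_1)(c_2+1)\sigma_v^2}{c_2-1}$ (both positive), the discriminant equals $4AB$, so the roots are $w=\dfrac{-A\pm\sqrt{AB}}{(1-c_1)(2c_1-1)}$. Rationalizing the relevant numerator via $\sqrt{AB}-A=\sqrt A(\sqrt B-\sqrt A)$ and $(\sqrt B-\sqrt A)(\sqrt B+\sqrt A)=B-A=(2c_1-1)\sigma_T^2$ turns it into $w^\star=\dfrac{\sqrt A\,\sigma_T^2}{(1-c_1)(\sqrt A+\sqrt B)}$, which is exactly \eqref{eqpower}; then \eqref{eqpilot} follows from $\sigma_{P_d}^2=\sigma_T^2-(1-c_1)w^\star=\dfrac{\sqrt B\,\sigma_T^2}{\sqrt A+\sqrt B}$. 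I would note that the factor $2c_1-1$ cancels in the rationalized form, which is why $w^\star$ is correct for every $c_1$, the value $c_1=\tfrac12$ (where the equation is actually linear) included. To confirm $w^\star$ is the global minimizer and not just a stationary point, I would use that $\delta_d(w)\to+\infty$ both as $w\to0^+$ (the $1/w$ pole) and as $w\to(\sigma_T^2/(1-c_1))^-$ (there $\sigma_{P_d}^2\to0^+$, so $1/\sigma_{P_d}^2\to\infty$), so a continuous function blowing up at both ends of the open feasible interval attains its minimum at an interior stationary point, and since $\delta_d'(0^+)<0$ the first stationary point to the right of $0$ is that minimum.

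The main obstacle I anticipate is purely bookkeeping: keeping track of the sign of $2c_1-1$ in a uniform way — exactly one positive root when $c_1>\tfrac12$, two positive roots when $c_1<\tfrac12$ (only $w^\star$ lying below $\sigma_T^2/(1-c_1)$), and the linear degeneracy at $c_1=\tfrac12$ — together with the radical manipulation that puts $w^\star$ into the stated form. Everything else is routine differentiation and substitution into the power constraint.
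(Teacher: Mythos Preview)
Your proposal is correct and follows exactly the route the paper indicates: the paper reduces the problem to minimizing $\delta_d$ under the constraint $\sigma_{P_d}^2+(1-c_1)\sigma_{w_d}^2=\sigma_T^2$ and then only says ``after straightforward calculations'' one obtains \eqref{eqpower}--\eqref{eqpilot}. You have carried out precisely those calculations --- the substitution, the quadratic stationarity condition, the discriminant identity $4AB$, and the rationalization via $B-A=(2c_1-1)\sigma_T^2$ --- and your boundary/sign analysis confirming that $w^\star$ is the global minimizer (and the only feasible stationary point for every $c_1\in(0,1)$) is a welcome addition that the paper omits.
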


\section{Discussion}
To get more insight into the proposed analysis, we provide here some comments and workouts on the theoretical results derived in the previous sections.\\[0.5cm]
\underline{\it High SNR  behaviour of the BER}: At high SNRs, the error variance parameters $\delta_t$ and $\delta_d$ are close to zero and hence, by using a first order Taylor expansion of the BER expressions in (\ref{eq:BER_t}) and (\ref{eq:BER_d}), we obtain:
\begin{eqnarray}
	BER_t &\approx& \frac{1}{2^{M-K+1}}(K\delta_t)^{M-K+1}{2(M-K)+1\choose M-K+1} \\
	BER_d &\approx&  \frac{1}{2^{\frac{1}{c_1}}} + O((K\delta_d)^{M-K+1})
\end{eqnarray}
where $O(x)$ denotes a real value of the same order of magnitude as $x$. From these approximated expressions, one can observe that the BER at the TDMT scheme is a monomial function of the estimation error variance parameter $\delta$ and the number of transmitters $K$. For example, if the noise power is decreased by a factor $2$, then the BER will decrease by  $2^{M-K+1}$. The diversity gain is thus equal to $M-K+1$, which is in accordance with the works in \cite{hedayat} and \cite{ieeeWC.wang07}.  
Also, we observe that, for the DDST case, we have a floor effect on the BER (i.e. the BER is lower bounded by $\frac{1}{2^{\frac{1}{c_1}}}$) due to the data distorsion inherent to this transmission scheme.
 \\[0.5cm]
\underline{\it Gaussian vs. Gaussian mixture model}: In our derivations we have found that the post-processing noise in the DDST case behaves asymptotically as a Gaussian mixture process while, in most of the existing works, the noise is assumed to be asymptotically Gaussian distributed. In fact, one can show that for large sample sizes (i.e. when $c_1 \longrightarrow 0$) the Gaussian mixture converges to a Gaussian distribution allowing us to retrieve the standard Gaussian noise assumption. However, for small or moderate sample sizes the considered Gaussian mixture model  leads to a much better approximation of the BER analytical expression than the one we would obtain with a post-processing Gaussian noise model. In other words, Theorem 2 results allow us to derive closed form expressions for the BER that are valid for relatively small sample sizes.\\[0.5cm]
\underline{\it Workouts on the optimal power allocation expressions of the TDMT scheme}: We consider here two limit cases: (i) The high SNR case where $\sigma_v^2 \ll \sigma_T^2$ and (ii) the case of high dimensional system (the number of transmit antennae is of the same order of magnitude as the number of receive antennae) where $c_2-1 \ll 1$. From (\ref{tmx_eq_pilot}), the data to pilot power ratio can then be approximated by:
\begin{eqnarray}
\mbox{case (i)} ~~~~ \frac{\sigma_{w_t}^2}{\sigma_{P_t}^2} &\approx& \frac{N_1}{\sqrt{N_2K}}  \label{workout_tdmt1}\\
\mbox{case (ii)} ~~~ \frac{\sigma_{w_t}^2}{\sigma_{P_t}^2} &\approx& \frac{N_1}{N_2}\label{workout_tdmt2}
\end{eqnarray}
Equation (\ref{workout_tdmt1}) shows that  the optimal power allocation in the high SNR case realizes  a kind of trade off between the pilot size and its power such that the total energy $N_1\sigma_{P_t}^2$ is kept constant. This suggests us to use the smallest possible pilot size that meet the technical constraint of limited transmit power,  to increase the effective channel throughput without loss of performance. \\
Equation (\ref{workout_tdmt1}) shows that in the difficult case of large dimensional system, one needs to allocate the same total energy to pilots and to data symbols, i.e. $N_1\sigma_{P_t}^2 \approx N_2\sigma_{w_t}^2$. In other word, we should give similar importance (in terms of power allocation) to the channel estimation  and to the data detection.
\\[0.5cm]
\underline{\it Workouts on the optimal power allocation expressions of the DDST scheme}: A similar workout is considered here for the DDST scheme. We consider the two previous limit cases and we assume that the sample size is much larger than the number of transmitters, i.e. $N \gg K$. In this context, we obtain the following approximations for the data to pilot power ratio:
 \begin{eqnarray}
\mbox{case (i)} ~~~~ \frac{\sigma_{w_d}^2}{\sigma_{P_d}^2} &\approx& \sqrt{\frac{N}{K}}  \label{workout_ddst1}\\
\mbox{case (ii)} ~~~ \frac{\sigma_{w_t}^2}{\sigma_{P_t}^2} &\approx& 1 \label{workout_ddst2}
\end{eqnarray}
Again we observe that for the large dimensional system case, one needs to allocate the same total energy to pilot and to the data. For high SNRs, one observe a kind of trade off between the pilot power and  size but in a different way than the TDMT case. In fact, if we increase by a factor of 4 the sample size, one can increase the  data to pilot power ratio by a factor of 2 without affecting the BER performance. 
\\[0.5cm]
\underline{\it High SNR BER comparison of the two pilot design schemes}: For the DDST scheme, the BER expression can be lower bounded as follows (using the convexity of $Q(\sqrt{bx})$ as a function of $b$):
\begin{eqnarray*}
{\rm BER}_d&=&\frac{1}{2^{\frac{1}{c_1}-1}}\sum_{s=0}^{\frac{1}{c_1}-1}{\frac{1}{c_1}-1 \choose s}J(M-K+1,K\delta_d,4s^2c_1^2) \\
&\geq& J(M-K+1,K\delta_d,\frac{1}{2^{\frac{1}{c_1}-1}}\sum_{s=0}^{\frac{1}{c_1}-1}{\frac{1}{c_1}-1 \choose s}4s^2c_1^2) = J(M-K+1,K\delta_d,1-c_1) \\
& \geq & J(M-K+1,K\delta_d,1)
\end{eqnarray*}
the latter inequality comes from the fact that  $J(m,a,b)$ is a decreasing function of its last argument. Now, in the high SNR and large sample size scenario (i.e, for $\sigma_v^2/\sigma_T^2 \ll 1$ and $N\gg N_1, K$), we have $\delta_t \approx \delta_d$ and by continuity  $ J(M-K+1,K\delta_d,1) \approx  J(M-K+1,K\delta_t,1)={\rm BER}_t$. Consequently, in this context, the TDMT scheme is better than the DDST in terms of BER, i.e. 
\[ {\rm BER}_d \geq {\rm BER}_t.\]

\section{Simulations}
\label{sec:simulation}
Despite being valid only for the asymptotic regime, our results are found to yield a good accuracy even for very small system dimensions. In this section, we present simulation results that compares between the TDMT and DDST schemes. 
\subsection{Performance comparison between DDST and TDMT based schemes}
In this section, except when mentioning, we consider a $2\times 4$ MIMO system ($K=2$, $M=4$) with a data block size $N=32$. 
\subsubsection{Bit error rate performance}
Fig. \ref{fig:BER_TDMT_DDST} plots the empirical and theoretical BER under QPSK constellation for $N=32$, $K=2$ and $M=4$ for the TDMT and DDST based schemes. All comparisons are conducted in the context when both schemes have the same total energy. The number of training symbols is set to $N_1=2$ ($N_2=30$) for the TDMT scheme.
\begin{figure}[htbp]
\centering 
\includegraphics[width=0.4\textwidth]{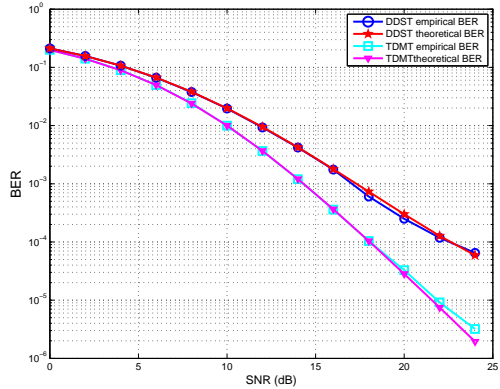}
\caption{Theoretical and empirical BER for the TDMT and DDST based schemes.}
\label{fig:BER_TDMT_DDST}
\end{figure}
For low SNR values (SNR below $6$ dB), both schemes achieve approximatively the same BER performance, and therefore, the DDST scheme outperforms its TDMT counterpart in terms of data rate, since it has a better bandwidth efficiency. For high SNR values, the noise caused by the data distortion is higher than the additive Gaussian noise, thus affecting the performance of the DDST scheme. 
\subsubsection{Applications}
To compare the efficiency of the TDMT and DDST schemes, we consider applications in which the BER should be below a certain threshold, say $10^{-2}$. This may be the case for instance of circuit-switched voice applications. Note that for non-coded systems, a target BER of $10^{-2}$ is commonly used.
\paragraph{Application 1}
In this scenario, we set the $\mathrm{SNR}\triangleq\frac{\sigma_T^2}{\sigma_v^2}$ to $15$ dB. We then vary the ratio $c_1=\frac{K}{N}$ from $0.01$ to $0.5$. Since we consider $K=2$ and $M=4$, $N=K/c_1$ varies also with $c_1$. For each value of $N$ we compute the BER by using \eqref{eq:BER_t} and \eqref{eq:BER_d}. Fig. \ref{fig:app} illustrates the obtained results. We also superposed in the same plot the empirical results for the TDMT and the DDST scheme. The results show a good match thereby supporting the usefulness of the derived results. 
\begin{figure}[htbp]
\centering 
\includegraphics[width=0.5\textwidth]{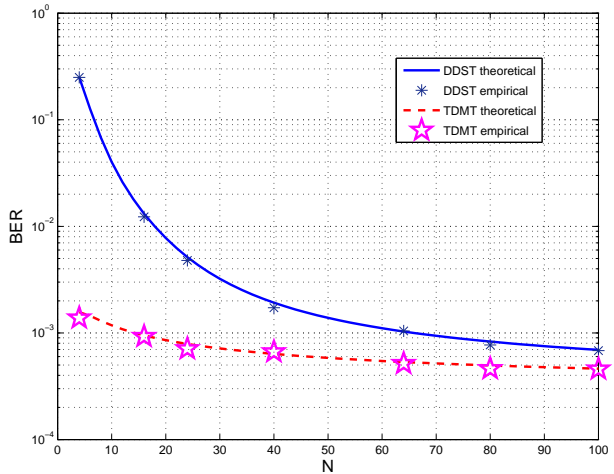}
\caption{BER with respect to $c_1$ when $K=2$, $M=4$ and SNR=15 dB}
\label{fig:app}
\end{figure}
We note that the DDST scheme may be interesting  for long enough frames ($N\geq 16$). For small frames (high distortion ratio $c_1$), the distortion of the data becomes too high thus reducing the interest of the DDST scheme. 
\paragraph{Application 2}
In this experiment, we propose to determine for the TDMT scheme ($K=2,M=4,N=32$) the optimal ratio $\frac{N_2}{N_1}$ that has to be used to meet a certain quality of service. For that, we  consider a scenario where the BER should be below $10^{-2}$. Using (\ref{tmx_eq_power}), (\ref{tmx_eq_pilot}) and (\ref{eq:BER_t}),  we determine the minimum number of required training symbols to meet the BER lower bound requirement. We then, plot the corresponding ratio $r=\frac{N_2}{N_1}$ with respect to the SNR.  We note that if the SNR is below $2$ dB, the BER requirement could not be achieved.
This is to be compared with the DDST scheme where the SNR should be set at least to $10.5$ dB so as to meet the BER lower bound requirement as it can be shown in fig. \ref{fig:r_value}.
Moreover, for a SNR more than $8.5$ dB, the minimum number of pilot symbols for channel identification  (equal to $K$) is sufficient to meet the BER requirement. 
\begin{figure}[h]
\centering 
\includegraphics[width=0.45\textwidth]{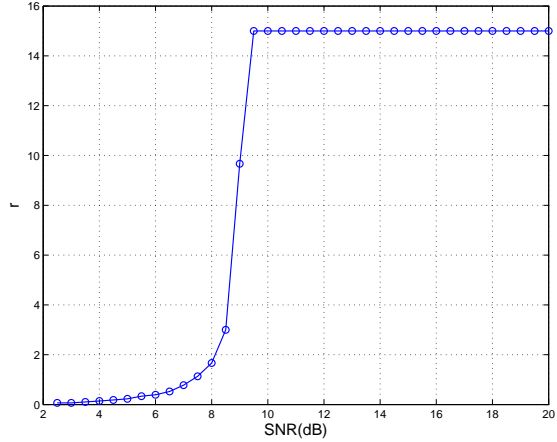}
\caption{Required $r$ versus SNR for $\mathrm{BER}\leq 10^{-2}$.}
\label{fig:r_value}
\end{figure}

\appendices
\section{Proof of theorem \ref{th:TDMT}}
\label{appendix:TDMT}
In the sequel, we propose to determine the asymptotic distribution of the post-processing noise of each entry of the matrix $\Delta {\bf W}_t$.
Actually the $(i,j)$ entry of $\Delta{\bf W}_t$ is given by:
$$
\left(\Delta{\bf W}_t\right)_{i,j}=-{\bf h}_i^{\#}\Delta{\bf H}_t{\bf w}_j+{\bf h}_i^{\#}\left({\bf I}_K-{\Delta{\bf H}_t}{\bf H}^{\#}\right){\bf v}_{2,j}+\widetilde{\bf h}_i\transconj{\left(\Delta{\bf H}_t\right)}\boldsymbol{\Pi}{\bf v}_{2,j}
$$
where ${\bf h}_i^{\#}$ and $\widetilde{\bf h}_i$ denote respectively the $i$th row of ${\bf H}^{\#}$ and $\left(\transconj{\bf H}{\bf H}\right)^{-1}$, and ${\bf w}_j$ and ${\bf v}_{2,j}$ denote $j$th columns of ${\bf W}_t$ and ${\bf V}_2$, respectively.
Conditioned on ${\bf H}$, ${{\bf V}_1}$ and ${\bf W}_t$, $\left(\Delta{\bf W}_t\right)_{i,j}$ is a Gaussian random variable with mean equal to $-{\bf h}_i^{\#}\Delta{\bf H}_t{\bf w}_j$ and variance 
\begin{eqnarray*}
\sigma_{w,K}^2&=&\sigma_v^2\left({\bf h}_i^{\#}-{\bf h}_i^{\#}\Delta{\bf H}_t{\bf H}^{\#}+\widetilde{\bf h}_i\transconj{\left(\Delta{\bf H}_t\right)}\boldsymbol{\Pi}\right)\left(\transconj{\left({\bf h}_i^{\#}\right)}-\transconj{\left({\bf H}^{\#}\right)}\transconj{\Delta{\bf H}}_t\transconj{\left({\bf h}_i^{\#}\right)}+\boldsymbol{\Pi}\Delta{\bf H}_t\transconj{\left(\widetilde{\bf h}_i\right)}\right)
\end{eqnarray*}
Since our proof will be based on the 'characteristic function' approach, we shall first recall the expression of the characteristic function for complex random variables:
\begin{theorem}
Let $X_n$ be a complex Gaussian random variable with mean $m_{X,n}$ and variance $\sigma_{X,n}^2$,  such that $\mathbb{E}(X_n-m_{X,n})^2=0$. Then, $X_n$ can be seen as a two-dimensional random variable corresponding to its real and imaginary parts. The characteristic function of $X_n$ is therefore given by:
$$
 \mathbb{E}\left[\exp\left(\jmath\Re(z^*X_n)\right)\right]=\exp\left(\jmath\Re\left(z^*m_{X,n}\right)\right)\exp\left(-\frac{1}{4}|z|^2\sigma_{X,n}^2\right).
$$
\label{th:characteristic}
\end{theorem}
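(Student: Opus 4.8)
The plan is to reduce the statement to the classical characteristic function of a real scalar Gaussian random variable. First I would split $X_n$ and $z$ into real and imaginary parts, writing $X_n=m_{X,n}+Z_n$ with $Z_n=Z_{n,R}+\jmath Z_{n,I}$ a zero-mean complex Gaussian, and $z=z_R+\jmath z_I$. The hypothesis $\mathbb{E}(X_n-m_{X,n})^2=\mathbb{E}Z_n^2=0$ expands into $\mathbb{E}Z_{n,R}^2-\mathbb{E}Z_{n,I}^2=0$ together with $\mathbb{E}(Z_{n,R}Z_{n,I})=0$; combining these with $\mathbb{E}|Z_n|^2=\sigma_{X,n}^2$ forces $\mathbb{E}Z_{n,R}^2=\mathbb{E}Z_{n,I}^2=\sigma_{X,n}^2/2$, and since $(Z_{n,R},Z_{n,I})$ is jointly Gaussian and uncorrelated, its two components are independent, each $\mathcal{N}(0,\sigma_{X,n}^2/2)$. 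This "propriety" bookkeeping is the one place where some care is needed.

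Next I would rewrite the exponent in purely real terms. Since $\Re(z^{*}X_n)=\Re(z^{*}m_{X,n})+\Re(z^{*}Z_n)$ and $\Re(z^{*}Z_n)=z_R Z_{n,R}+z_I Z_{n,I}$, the quantity $\Re(z^{*}X_n)$ is a real Gaussian random variable; using the independence and common variance of $Z_{n,R}$ and $Z_{n,I}$, its mean is $\Re(z^{*}m_{X,n})$ and its variance is $z_R^2\,\sigma_{X,n}^2/2+z_I^2\,\sigma_{X,n}^2/2=\tfrac{1}{2}\sigma_{X,n}^2|z|^2$.

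Finally, applying the classical identity $\mathbb{E}[e^{\jmath Y}]=e^{\jmath\mu-\sigma^2/2}$ for $Y\sim\mathcal{N}(\mu,\sigma^2)$ with $\mu=\Re(z^{*}m_{X,n})$ and $\sigma^2=\tfrac{1}{2}\sigma_{X,n}^2|z|^2$ gives
\[
\mathbb{E}\!\left[\exp\!\left(\jmath\Re(z^{*}X_n)\right)\right]=\exp\!\left(\jmath\Re(z^{*}m_{X,n})\right)\exp\!\left(-\tfrac{1}{4}\sigma_{X,n}^2|z|^2\right),
\]
which is exactly the claim. There is no genuine obstacle here, since the result is a standard fact; the only point worth stating explicitly in the write-up is how the assumption $\mathbb{E}(X_n-m_{X,n})^2=0$ enters, namely to guarantee that the real and imaginary parts are i.i.d.\ up to scaling, which is what makes the variance of $\Re(z^{*}X_n)$ collapse to the isotropic form $\tfrac12\sigma_{X,n}^2|z|^2$ rather than depending on the direction of $z$.
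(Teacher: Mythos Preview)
Your argument is correct: the propriety condition $\mathbb{E}(X_n-m_{X,n})^2=0$ is exactly what forces the real and imaginary parts of $X_n-m_{X,n}$ to be uncorrelated with common variance $\sigma_{X,n}^2/2$, hence independent under joint Gaussianity, and then $\Re(z^{*}X_n)$ is real Gaussian with the stated mean and variance, so the scalar formula $\mathbb{E}[e^{\jmath Y}]=e^{\jmath\mu-\sigma^2/2}$ finishes the job.

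There is nothing to compare against in the paper: the statement is introduced there with the words ``we shall first recall the expression of the characteristic function'' and is given \emph{without proof}, being treated as a standard fact that is then applied directly. Your write-up therefore supplies a proof where the paper offers none; the only stylistic remark is that your emphasis on how the hypothesis $\mathbb{E}(X_n-m_{X,n})^2=0$ enters is well placed, since this is precisely what distinguishes the proper complex Gaussian case (isotropic variance $\tfrac{1}{2}\sigma_{X,n}^2|z|^2$) from the general one.
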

Applying Theorem\ref{th:characteristic}, the conditional characteristic function of $\left(\Delta{\bf W}\right)_{i,j}$ can be written as:
\begin{equation}
\mathbb{E}\left[\exp\left(\jmath\Re\left(z^*\left(\Delta{\bf W}_t\right)_{i,j}\right)\right)|{\bf V}_1,{\bf H},{\bf W}_t\right]=\exp\left(-\jmath \Re\left(z^*{\bf h}_i^{\#}\Delta{\bf H}_t{\bf w}_j\right)\right)\exp\left(-\frac{1}{4}|z|^2\sigma_{w,K}^2\right).
\label{eq:characteristic}
\end{equation}
To remove the condition expectation on ${\bf V}_1$ and ${\bf W}_t$, one should prove that $\sigma_{w,K}^2$ converges almost surely to a deterministic quantity. Actually, $\sigma_{w,K}^2$ can be expanded as follows:
\begin{align*}
\sigma_{w,K}^2&=\sigma_v^2{\bf h}_i^{\#}\transconj{\left({\bf h}_i^{\#}\right)}+\sigma_v^2{\bf h}_i^{\#}\Delta{\bf H}_t\left(\transconj{\bf H}{\bf H}\right)^{-1}\transconj{\left(\Delta{\bf H}_t\right)}\transconj{\left({\bf h}_i^{\#}\right)}-2\sigma_v^2\Re\left({\bf h}_i^{\#}\Delta{\bf H}_t\left(\transconj{\bf H}{\bf H}\right)^{-1}\transconj{\left({\bf h}_i^{\#}\right)}\right)\\
&+\sigma_v^2\widetilde{\bf h}_i\transconj{\Delta{\bf H}}_t\boldsymbol{\Pi}\Delta{\bf H}_t\transconj{\left(\widetilde{\bf h}_i\right)}.
\end{align*}
Let 
\begin{align*}
A_{\sigma,K}&=\sigma_v^2{\bf h}_i^{\#}\Delta{\bf H}_t\left(\transconj{\bf H}{\bf H}\right)^{-1}\transconj{\left(\Delta{\bf H}_t\right)}\transconj{\left({\bf h}_i^{\#}\right)}\\
B_{\sigma,K}&=\sigma_v^2 \widetilde{\bf h}_i\transconj{\Delta{\bf H}}_t\boldsymbol{\Pi}\Delta{\bf H}_t\transconj{\left(\widetilde{\bf h}_i\right)}\\
{\epsilon}_{\sigma,K}&={\bf h}_i^{\#}\Delta{\bf H}_t\left(\transconj{\bf H}{\bf H}\right)^{-1}\transconj{\left({\bf h}_i^{\#}\right)}.
\end{align*}
The limiting behaviour of $A_{\sigma,K}$ can be derived by using the following known results describing the asymptotic behaviour of an important class of quadratic forms:
\begin{lemma}{\cite[Lemma 2.7]{anlpro.silverstein98}}
\label{lemma:silverstein}
Let ${\bf x}=\transpose{\left[X_1,\cdots,X_N\right]}$ be a $N\times 1$ vector where the $X_n$ are centered i.i.d. complex random variables with unit variance. Let ${\bf A}$ be a deterministic $N\times N$ complex matrix. Then, for any $p\geq2$ there exists a constant $C_p$ depending on $p$ only such that:
\begin{equation}
\mathbb{E}\left|\frac{1}{N}\transconj{\bf x}{\bf A}{\bf x}-\frac{1}{N}\mathrm{Tr}({\bf A})\right|^p\leq \frac{C_p}{N^p}\left(\left(\mathbb{E}|X_1|^4\mathrm{Tr}\left({\bf A}\transconj{\bf A}\right)\right)^{p/2}+\mathbb{E}|X_1|^{2p}\mathrm{Tr}\left(\left({\bf A}\transconj{\bf A}\right)^{p/2}\right)\right)
\label{eq:silverstein}
\end{equation}
Noticing that $\mathrm{Tr}\left({\bf A}\transconj{\bf A}\right)\leq N \|{\bf A}\|^2$ and that $ \mathrm{Tr}\left(\left({\bf A}\transconj{\bf A}\right)^{p/2}\right)\leq N \|{\bf A}\|^p$, we obtain the simpler inequality:
\begin{equation}
\mathbb{E}\left|\frac{1}{N}\transconj{\bf x}{\bf A}{\bf x}-\frac{1}{N}\mathrm{Tr}({\bf A})\right|^p\leq\frac{C_p}{N^{p/2}}\|{\bf A}\|^p\left(\left(\mathbb{E}|X_1|^2\right)^{p/2}+\mathbb{E}|X_1|^{2p}\right)
\label{eq:silverstein_simple}
\end{equation}
\end{lemma}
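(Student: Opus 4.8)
To prove Lemma~\ref{lemma:silverstein}, the plan is to split the centred quadratic form into a diagonal sum of independent terms and an off-diagonal form, bound each by classical moment inequalities for sums of independent variables, and keep every intermediate quantity routed into the two functionals $\mathrm{Tr}({\bf B})$ and $\mathrm{Tr}({\bf B}^{p/2})$ with ${\bf B}:={\bf A}\transconj{\bf A}\succeq 0$. Writing ${\bf A}_0={\bf A}-\mathrm{diag}({\bf A})$,
\[
\transconj{\bf x}{\bf A}{\bf x}-\mathrm{Tr}({\bf A})=\sum_i {\bf A}_{i,i}\bigl(|X_i|^2-1\bigr)+\transconj{\bf x}{\bf A}_0{\bf x}=:D+O,
\]
so by the $c_r$-inequality it suffices to bound $\mathbb{E}|D|^p$ and $\mathbb{E}|O|^p$ separately and divide by $N^p$ at the end. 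Two elementary matrix facts will be invoked repeatedly: (i) for a positive semidefinite matrix ${\bf M}$ and $q\ge 1$, $\sum_i {\bf M}_{i,i}^{\,q}\le\mathrm{Tr}({\bf M}^{q})$ (the diagonal of ${\bf M}$ is majorized by its spectrum and $t\mapsto t^{q}$ is convex); (ii) for $p\ge 2$, $\sum_{i,j}|{\bf A}_{i,j}|^{p}\le\mathrm{Tr}({\bf B}^{p/2})$ and $\mathrm{Tr}\bigl((\transconj{{\bf A}_0}{\bf A}_0)^{p/2}\bigr)=\|{\bf A}_0\|_{S_p}^{p}\le 2^{p}\,\mathrm{Tr}({\bf B}^{p/2})$, the last because pinching to the diagonal contracts every Schatten norm. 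With these in hand the term $D$ is immediate: it is a sum of independent centred variables ${\bf A}_{i,i}(|X_i|^2-1)$, so Rosenthal's inequality, together with $\mathbb{E}\bigl||X_1|^2-1\bigr|^{2}\le\mathbb{E}|X_1|^{4}$, $\mathbb{E}\bigl||X_1|^2-1\bigr|^{p}\le 2^{p}\mathbb{E}|X_1|^{2p}$ (Jensen gives $\mathbb{E}|X_1|^{2p}\ge 1$) and fact (i) with ${\bf M}={\bf B}$, yields $\mathbb{E}|D|^p\le C_p\bigl[(\mathbb{E}|X_1|^{4}\mathrm{Tr}\,{\bf B})^{p/2}+\mathbb{E}|X_1|^{2p}\mathrm{Tr}({\bf B}^{p/2})\bigr]$, already of the required form.

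For the off-diagonal form I would first decouple: $\mathbb{E}|\transconj{\bf x}{\bf A}_0{\bf x}|^{p}\le C_p\,\mathbb{E}|\transconj{\bf x}{\bf A}_0{\bf y}|^{p}$ with ${\bf y}$ an independent copy of ${\bf x}$ (de la Pe\~na--Montgomery-Smith decoupling). Conditioning on ${\bf y}$, $\transconj{\bf x}{\bf A}_0{\bf y}=\sum_i\overline{X}_i({\bf A}_0{\bf y})_i$ is a sum of independent centred terms in ${\bf x}$, so Rosenthal's inequality gives $\mathbb{E}_{\bf x}|\,\cdot\,|^{p}\le C_p\bigl(\|{\bf A}_0{\bf y}\|^{p}+\mathbb{E}|X_1|^{p}\sum_i|({\bf A}_0{\bf y})_i|^{p}\bigr)$. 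Taking $\mathbb{E}_{\bf y}$ and applying Rosenthal once more, conditionally, to each $({\bf A}_0{\bf y})_i$, facts (i)--(ii) bound $\mathbb{E}_{\bf y}\sum_i|({\bf A}_0{\bf y})_i|^{p}$ by $C_p\,\mathbb{E}|X_1|^{2p}\mathrm{Tr}({\bf B}^{p/2})$; what remains is the ``variance'' contribution $\mathbb{E}_{\bf y}\|{\bf A}_0{\bf y}\|^{p}=\mathbb{E}\bigl(\transconj{\bf y}{\bf C}{\bf y}\bigr)^{p/2}$ with ${\bf C}=\transconj{{\bf A}_0}{\bf A}_0\succeq 0$ and $\mathrm{Tr}\,{\bf C}\le\mathrm{Tr}\,{\bf B}$, which is a centred quadratic form of exactly the same shape as the original one, only with exponent $p/2<p$.

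I would therefore close the argument by induction on $p$: write $\transconj{\bf y}{\bf C}{\bf y}=(\transconj{\bf y}{\bf C}{\bf y}-\mathrm{Tr}\,{\bf C})+\mathrm{Tr}\,{\bf C}$ and apply the inductive hypothesis to ${\bf C}$ at exponent $p/2$; since ${\bf C}$ is Hermitian, $\mathrm{Tr}({\bf C}^{2})=\|{\bf A}_0\|_{S_4}^{4}\le\|{\bf A}_0\|_{S_2}^{4}\le(\mathrm{Tr}\,{\bf B})^{2}$ and $\mathrm{Tr}({\bf C}^{p/2})=\|{\bf A}_0\|_{S_p}^{p}\le 2^{p}\mathrm{Tr}({\bf B}^{p/2})$, so the trace quantities the recursion generates collapse back onto $\mathrm{Tr}\,{\bf B}$ and $\mathrm{Tr}({\bf B}^{p/2})$; the base case $p\in[2,4)$ is handled directly, bounding $\mathbb{E}|\transconj{\bf y}{\bf C}{\bf y}-\mathrm{Tr}\,{\bf C}|^{p/2}$ by its $L^{2}$-norm and computing the resulting second moment, which is $\le C\,\mathbb{E}|X_1|^{4}\mathrm{Tr}\,{\bf B}$. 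Combining the bounds for $D$ and $O$ and dividing by $N^{p}$ gives \eqref{eq:silverstein}; the simpler form \eqref{eq:silverstein_simple} then follows at once from $\mathrm{Tr}({\bf B})\le N\|{\bf A}\|^{2}$ and $\mathrm{Tr}({\bf B}^{p/2})\le N\|{\bf A}\|^{p}$ (all eigenvalues of ${\bf B}$ lie in $[0,\|{\bf A}\|^{2}]$) together with Lyapunov's inequality to merge the moment factors. The main obstacle I anticipate is precisely this bookkeeping inside the induction: arranging the decoupling step and the repeated Rosenthal estimates so that no intermediate quantity ever leaks into a coarser, dimension-dependent bound (e.g.\ a bare $\|{\bf A}\|^{2}$ or $N$), which is exactly what keeps the final inequality in the sharp form of \cite[Lemma~2.7]{anlpro.silverstein98}.
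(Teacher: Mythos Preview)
The paper does not prove this lemma; it is quoted as \cite[Lemma~2.7]{anlpro.silverstein98} and used as a black box, with only the passage from \eqref{eq:silverstein} to \eqref{eq:silverstein_simple} justified in the statement itself via the trace bounds $\mathrm{Tr}({\bf A}\transconj{\bf A})\le N\|{\bf A}\|^{2}$ and $\mathrm{Tr}\bigl(({\bf A}\transconj{\bf A})^{p/2}\bigr)\le N\|{\bf A}\|^{p}$. So there is nothing in the paper to compare your argument against beyond that one-line deduction, which you reproduce correctly at the end.

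Your reconstruction of the main inequality is sound in outline: the diagonal/off-diagonal split, Rosenthal for the diagonal sum, decoupling followed by two conditional applications of Rosenthal for the bilinear part, and a descent in $p$ to close the recursion on $\mathbb{E}\|{\bf A}_0{\bf y}\|^{p}$. This is a legitimate route, though it is not the one taken in the original reference: Bai and Silverstein write $\transconj{\bf x}{\bf A}{\bf x}-\mathrm{Tr}({\bf A})$ as a sum of martingale differences with respect to the filtration generated by $X_1,\dots,X_N$ and apply Burkholder's inequality, which bypasses both the decoupling step and the induction on $p$. Both approaches funnel everything into the same two trace functionals; yours is more modular but, as you anticipate, heavier on bookkeeping, while the martingale argument is shorter once Burkholder is in hand. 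One small slip: in your base case $p\in[2,4)$ you state that the second moment of $\transconj{\bf y}{\bf C}{\bf y}-\mathrm{Tr}\,{\bf C}$ is $\le C\,\mathbb{E}|X_1|^{4}\,\mathrm{Tr}\,{\bf B}$; the correct bound is $\le C\,\mathbb{E}|X_1|^{4}\,\mathrm{Tr}({\bf C}^{2})\le C\,\mathbb{E}|X_1|^{4}\,(\mathrm{Tr}\,{\bf B})^{2}$, but after raising to the power $p/4$ this still lands inside $(\mathbb{E}|X_1|^{4}\,\mathrm{Tr}\,{\bf B})^{p/2}$, so the conclusion is unaffected.
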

Hence, if ${\bf A}$ and ${\bf x}$ have respectively finite spectral norm and finite eigth moment, we can conclude, using Borel-Cantelli lemma, about the almost convergence of the quadratic form $\frac{1}{N}\transconj{\bf x}{\bf A}{ \bf x}$, thus yielding the following corollary:
\begin{corollary}
 Let ${\bf x}=\transpose{\left[x_1,\cdots,x_N\right]}$ be a $N\times 1$ vector where the entries $x_i$ are centered i.i.d. complex random variables with unit variance and finite eight order. Let ${\bf A}$ be a determinsitic $N\times N$ complex matrix with bounded spectral norm. Then,
$$
\frac{1}{N}\transconj{\bf x}{\bf A} {\bf x}-\frac{1}{N}\mathrm{Tr}({\bf A}) \longrightarrow 0 \hspace{1cm}\textnormal{almost surely.}
$$
\label{corollary:quadratic}
\end{corollary}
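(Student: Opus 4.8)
The plan is to apply the moment inequality of Lemma~\ref{lemma:silverstein} with the exponent $p=4$ and then invoke the Borel--Cantelli lemma. First I would set $Y_N = \frac{1}{N}\transconj{\bf x}{\bf A}{\bf x}-\frac{1}{N}\mathrm{Tr}({\bf A})$ and apply the simplified bound \eqref{eq:silverstein_simple} with $p=4$. Since the entries of ${\bf x}$ are assumed to have finite eighth moment, both $\mathbb{E}|X_1|^2=1$ and $\mathbb{E}|X_1|^{8}$ are finite, and since $\|{\bf A}\|$ is bounded uniformly in $N$, this yields $\mathbb{E}|Y_N|^4 \le C N^{-2}$ for some constant $C$ not depending on $N$.

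Next I would observe that $\sum_{N\ge 1} \mathbb{E}|Y_N|^4 \le C\sum_{N\ge 1} N^{-2} < \infty$. By Markov's inequality, for every $\varepsilon>0$ one has $\mathbb{P}(|Y_N|>\varepsilon) \le \varepsilon^{-4}\,\mathbb{E}|Y_N|^4$, hence $\sum_{N\ge 1}\mathbb{P}(|Y_N|>\varepsilon)<\infty$. The Borel--Cantelli lemma then gives $\mathbb{P}\bigl(\limsup_N\{|Y_N|>\varepsilon\}\bigr)=0$, i.e.\ almost surely $|Y_N|\le\varepsilon$ for all $N$ large enough. Taking $\varepsilon=1/j$ along $j\in\NN$ and intersecting the corresponding probability-one events shows $Y_N\to 0$ almost surely, which is exactly the claim.

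The argument is routine once Lemma~\ref{lemma:silverstein} is in hand; the only point requiring a little care is that the moment bound must be summable in $N$, which is precisely why $p=4$ (rather than $p=2$) is used: with $p=2$ inequality \eqref{eq:silverstein_simple} only gives $\mathbb{E}|Y_N|^2=O(1/N)$, a divergent series, so one would obtain convergence in probability but not almost surely. This is also the reason the corollary requires a finite eighth moment rather than merely a finite fourth moment, and why the uniform boundedness of $\|{\bf A}\|$ is essential — it is what keeps the constant in the bound independent of $N$ so that the $N^{-2}$ decay survives.
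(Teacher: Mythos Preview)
Your proof is correct and follows exactly the route the paper indicates just before stating the corollary: apply Lemma~\ref{lemma:silverstein} (specifically the simplified bound \eqref{eq:silverstein_simple}) with $p=4$, which requires the finite eighth moment and the uniform bound on $\|{\bf A}\|$, to obtain a summable $O(N^{-2})$ bound on the fourth moment, and then conclude almost sure convergence via Markov's inequality and the Borel--Cantelli lemma. Your additional remark explaining why $p=2$ would only yield convergence in probability is a nice clarification of why the eighth-moment hypothesis is the natural one here.
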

By corollary \ref{corollary:quadratic}, the asymptotic behavior of ${\bf A}_{\sigma,K}$ is then given by:
$$
{A}_{\sigma,K}- \frac{\sigma_v^2\left[\left(\transconj{\bf H}{\bf H}\right)^{-1}\right]_{i,i}}{N_1\sigma_{P}^2}\mathrm{Tr}\left(\transconj{\bf H}{\bf H}\right)^{-1}\longrightarrow 0 \hspace{1cm}\textnormal{almost surely.}
$$
Since $\frac{1}{K}\mathrm{Tr}\left(\transconj{\bf H}{\bf H}\right)^{-1}$ converges asymptotically to $\frac{1}{c_2-1}$ as the dimensions go to infinity \cite{book.verdu04}, we get:
$$A_{\sigma,K}-\frac{c_1(1+r)\sigma_v^4}{(c_2-1)\sigma_{P_t}^2}\left[\left(\transconj{\bf H}{\bf H}\right)^{-1}\right]_{i,i}\longrightarrow 0.
$$
Note that Theorem\ref{corollary:quadratic} can be applied since the smallest eigenvalue of the Wishart matrix  $\left(\transconj{\bf H}{\bf H}\right)$ are almost surely uniformely bounded away from zero  by $(1-\sqrt{c_2})^2> 0$,  \cite{anlpro.silverstein85}. 

Before determining the limiting behavior of $B_{\sigma,K}$, we shall need the following lemma:
\begin{lemma}
Let ${\bf Y}=\left(\frac{1}{\sqrt{K}} y_{i,j}\right)_{i=1,j=1}^{M,K}$ be a $M\times K$ with Gaussian i.i.d entries. Then, in the asymptotic regime given by:
$$
M,K\to \infty \hspace{0.2cm} \textnormal{such that} \hspace{0.2cm} \frac{M}{K}\to c_2>1 
$$
we have:
$$
\left[\left(\transconj{\bf Y}{\bf Y}\right)^{-2}\right]_{i,i}-\frac{c_2}{c_2-1}\left(\left[\left(\transconj{\bf Y}{\bf Y}\right)^{-1}\right]_{i,i}\right)^2\to 0
$$
\label{lemma:new}
\end{lemma}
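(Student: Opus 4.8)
The plan is to reduce the statement to a single algebraic identity --- obtained from the Schur complement --- which expresses the two diagonal entries through a quadratic form that the trace lemma already available in the paper (Corollary~\ref{corollary:quadratic}) controls. Set $\W=\transconj{\Y}\Y$, write $\y_i$ for the $i$-th column of $\Y$ and $\Y_i$ for the $M\times(K-1)$ matrix obtained by deleting it. Reordering the index $i$ to the front and using block inversion, $[\W^{-1}]_{i,i}=\bigl(\transconj{\y_i}\,\boldsymbol{\Pi}_i\,\y_i\bigr)^{-1}$ with $\boldsymbol{\Pi}_i=\I_M-\Y_i(\transconj{\Y_i}\Y_i)^{-1}\transconj{\Y_i}$ the orthogonal projector onto the complement of the range of $\Y_i$, while the $i$-th column of $\W^{-1}$ has first entry $[\W^{-1}]_{i,i}$ and lower block $-[\W^{-1}]_{i,i}(\transconj{\Y_i}\Y_i)^{-1}\transconj{\Y_i}\y_i$. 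Taking the squared Euclidean norm of that column then yields
$$
[\W^{-2}]_{i,i}=\bigl([\W^{-1}]_{i,i}\bigr)^{2}\bigl(1+\transconj{\y_i}\,\Q_i\,\y_i\bigr),\qquad \Q_i=\Y_i(\transconj{\Y_i}\Y_i)^{-2}\transconj{\Y_i},
$$
and one notes that $\tr\Q_i=\tr(\transconj{\Y_i}\Y_i)^{-1}$.

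Next I would exploit independence: the column $\y_i$ is independent of $\Y_i$, hence of $\Q_i$ and of $\boldsymbol{\Pi}_i$, and $\sqrt{K}\,\y_i$ has i.i.d. unit-variance Gaussian entries. Conditioning on $\Y_i$ and invoking Corollary~\ref{corollary:quadratic} (exactly as the paper does for $A_{\sigma,K}$ and $B_{\sigma,K}$) gives, almost surely, $\transconj{\y_i}\Q_i\y_i-\frac{1}{K}\tr\Q_i\to0$ and $\transconj{\y_i}\boldsymbol{\Pi}_i\y_i-\frac{1}{K}\tr\boldsymbol{\Pi}_i\to0$; this uses that $\|\boldsymbol{\Pi}_i\|=1$ and that $\|\Q_i\|$ is almost surely bounded (it equals the inverse of the smallest eigenvalue of $\transconj{\Y_i}\Y_i$, which is a.s. bounded below by $(1-\sqrt{c_2})^{2}>0$, \cite{anlpro.silverstein85}, the aspect ratio being unchanged). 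Since $\tr\boldsymbol{\Pi}_i=M-K+1$ so that $\frac{1}{K}\tr\boldsymbol{\Pi}_i\to c_2-1$, and since $\frac{1}{K}\tr(\transconj{\Y_i}\Y_i)^{-1}\to\frac{1}{c_2-1}$ by the Marchenko--Pastur evaluation already used in the paper for $\frac{1}{K}\tr(\transconj{\H}\H)^{-1}$ \cite{book.verdu04} (passing from $K$ to $K-1$ columns rescales but does not change the limit), I obtain $\transconj{\y_i}\Q_i\y_i\asto\frac{1}{c_2-1}$ and $[\W^{-1}]_{i,i}\asto\frac{1}{c_2-1}$, so in particular $\bigl([\W^{-1}]_{i,i}\bigr)^{2}$ is almost surely bounded.

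Finally, substituting into the identity of the first step and using $1-\frac{c_2}{c_2-1}=-\frac{1}{c_2-1}$,
$$
[\W^{-2}]_{i,i}-\frac{c_2}{c_2-1}\bigl([\W^{-1}]_{i,i}\bigr)^{2}=\bigl([\W^{-1}]_{i,i}\bigr)^{2}\Bigl(\transconj{\y_i}\Q_i\y_i-\frac{1}{c_2-1}\Bigr)\asto 0,
$$
since the first factor is almost surely bounded while the parenthesis tends almost surely to $0$; this is the assertion of the lemma. The main difficulty is not a deep one but a matter of bookkeeping: deriving the column-of-the-inverse formula cleanly from the Schur complement, and tracking the scalings (entries of variance $1/K$, and the shift $K\mapsto K-1$ when a column is removed, which affects the trace normalizations but not the limiting constants). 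A secondary point requiring the usual care --- and already handled in the paper for $A_{\sigma,K}$ and $B_{\sigma,K}$ --- is that Corollary~\ref{corollary:quadratic} is stated for a deterministic matrix, so one conditions on $\Y_i$ and then removes the conditioning by means of the almost-sure spectral bound on $\transconj{\Y_i}\Y_i$.
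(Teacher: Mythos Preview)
Your argument is correct and follows essentially the same route as the paper: the same Schur-complement/block-inversion identity $[\W^{-2}]_{i,i}=\bigl([\W^{-1}]_{i,i}\bigr)^{2}\bigl(1+\transconj{\y_i}\Y_i(\transconj{\Y_i}\Y_i)^{-2}\transconj{\Y_i}\y_i\bigr)$, the same appeal to Corollary~\ref{corollary:quadratic} for the quadratic form, and the same limit $\frac{1}{K}\tr(\transconj{\Y_i}\Y_i)^{-1}\to\frac{1}{c_2-1}$. The only difference is that you make explicit two points the paper leaves implicit: the spectral-norm bound on $\Q_i$ needed to invoke the trace lemma, and the almost-sure boundedness of $\bigl([\W^{-1}]_{i,i}\bigr)^{2}$ (which you obtain via a second application of the trace lemma to $\boldsymbol{\Pi}_i$) needed to pass from convergence of the parenthesis to convergence of the product.
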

\begin{proof}
Without loss of generality, we can restrict our proof to the case where $i=1$. Let ${\bf y}_1,\cdots,{\bf y}_K$ denote the columns of ${\bf Y}$. Matrix $\transconj{\bf Y}{\bf Y}$ is then given by:
$$
\transconj{\bf Y}{\bf Y}=\begin{bmatrix}
\transconj{\bf y}_1{\bf y}_1 & \transconj{\bf y}_1{\bf y}_2 & \cdots & \transconj{\bf y}_1{\bf y}_K \\
\vdots & & & \vdots \\
\transconj{\bf y}_K{\bf y}_1 & \transconj{\bf y}_K{\bf y}_2 & \cdots & \transconj{\bf y}_K{\bf y}_K
\end{bmatrix}
$$
Let ${\bf v}_y=\left[\left[\left(\transconj{\bf Y}{\bf Y}\right)^{-1}\right]_{1,2},\cdots,\left[\left(\transconj{\bf Y}{\bf Y}\right)^{-1}\right]_{1,K}\right]$. Then, using the formula of the inverse of block matrices, we get:
$$
{\bf v}_y=-\left[\left(\transconj{\bf Y}{\bf Y}\right)^{-1}\right]_{1,1}\transconj{\bf y}_1\widetilde{\bf Y}\left(\transconj{\widetilde{\bf Y}}\widetilde{\bf Y}\right)^{-1}
$$
where $\widetilde{\bf Y}=\left[{\bf y}_2,\cdots,{\bf y}_K\right]$.

On the other hand,
\begin{align*}
\left[\left(\transconj{\bf Y}{\bf Y}\right)^{-2}\right]_{1,1}&=\left(\left[\left(\transconj{\bf Y}{\bf Y}\right)^{-1}\right]_{1,1}\right)^2+{\bf v}_y\transconj{\bf v}_y\\
&=\left(\left[\left(\transconj{\bf Y}{\bf Y}\right)^{-1}\right]_{1,1}\right)^2\left(1+\transconj{\bf y}_1\widetilde{\bf Y}\left(\transconj{\widetilde{\bf Y}}\widetilde{\bf Y}\right)^{-2}\transconj{\widetilde{\bf Y}}{\bf y}_1\right)
\end{align*}
Using corollary \ref{corollary:quadratic}, we have:
$$
\transconj{\bf y}_1\widetilde{\bf Y}\left(\transconj{\widetilde{\bf Y}}\widetilde{\bf Y}\right)^{-2}\transconj{\widetilde{\bf Y}}{\bf y}_1-\frac{1}{K}\mathrm{Tr}\left(\transconj{\widetilde{\bf Y}}\widetilde{\bf Y}\right)^{-1}\to 0 \hspace{0.2cm}\textnormal{almost surely.}
$$
Since $\frac{1}{K}\mathrm{Tr}\left(\transconj{\widetilde{\bf Y}}\widetilde{\bf Y}\right)^{-1}$ tends to $\frac{1}{c_2-1}$ almost surely, we get the desired result.
\end{proof}
We are now in position to deal with the term ${B}_{\sigma,K}$. Using corollary \ref{corollary:quadratic}, we get:
$$
{B}_{\sigma,K}-\frac{\sigma_v^4(M-K)}{N_1\sigma_{P}^2}\left[\left(\transconj{\bf H}{\bf H}\right)^{-2}\right]_{i,i} \to 0 \hspace{0.2cm}\textnormal{almost surely}
$$
Hence, 
$$
{B}_{\sigma,K}-\frac{\sigma_v^4 c_1(c_2-1)(1+r)}{\sigma_{P}^2}\left[\left(\transconj{\bf H}{\bf H}\right)^{-2}\right]_{i,i} \to 0 \hspace{0.2cm} \textnormal{almost surely}
$$
Using lemma \ref{lemma:new}, we get that:
$$
{B}_{\sigma,K}-\frac{\sigma_v^4 c_1c_2(1+r)}{\sigma_{P}^2}\left(\left[\left(\transconj{\bf H}{\bf H}\right)^{-1}\right]_{i,i}\right)^2 \to 0 \hspace{0.2cm} \textnormal{almost surely}
$$
It can be shown that $\left[\left(\transconj{\bf H}{\bf H}\right)^{-1}\right]_{i,i}$ converge almost surely to $\frac{1}{c_2-1}$, (its inverse is  the mean of independent random variables \cite{winters} ), then:
$$
{B}_{\sigma,K}-\frac{\sigma_v^4 c_1c_2(1+r)}{\sigma_{P}^2(c_2-1)}\left[\left(\transconj{\bf H}{\bf H}\right)^{-1}\right]_{i,i} \to 0 \hspace{0.2cm} \textnormal{almost surely}
$$
To prove the almost sure convergence to zero of $\epsilon_{\sigma,K}$, we will be based on the following result, about the asymptotic behaviour of weighted averages:
\begin{theorem}{Almost sure convergence of weighted averages \cite{jtp.baxter04}}
Let ${\bf a}=\transpose{\left[a_1,\cdots,a_N\right]}$ be a sequence of $N\times 1$ deterministic real vectors with $\sup_N\frac{1}{N}\transpose{\bf a}_N{\bf a}_N <+\infty$. Let ${\bf x}_N=\left[x_1,\cdots,x_N\right]$ be a $N\times 1$ real random vector with i.i.d. entries, such that $\mathbb{E}x_1=0$ and $\mathbb{E}|x_1| < +\infty$. Therefore, $\frac{1}{N}\transpose{\bf a}_N {\bf x}_N$ converges almost surely to zero as $N$ tends to infinity. 
\label{th:almostsure}
\end{theorem}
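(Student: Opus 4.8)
\emph{Proof proposal.} To prove Theorem~\ref{th:almostsure}, the plan is to reduce the Ces\`aro statement to the almost-sure convergence of a random series through Kronecker's lemma, after a truncation that compensates for the weak moment hypothesis $\mathbb{E}|x_1|<+\infty$. Write $T_N=\sum_{i=1}^N a_ix_i$, so that the claim is $T_N/N\asto0$, and set $C=\sup_N\frac1N\sum_{i=1}^N a_i^2<+\infty$. First I would split $x_i=y_i+z_i$ with $y_i=x_i\mathbf{1}_{\{|x_i|\le i\}}$, and further write $y_i=\tilde y_i+\mathbb{E}y_i$ with $\tilde y_i$ centred, giving $a_ix_i=a_i\tilde y_i+a_i\mathbb{E}y_i+a_iz_i$; it then suffices to treat the three corresponding Ces\`aro averages separately.

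The ``tail'' contribution is easy: since the $x_i$ are identically distributed, $\sum_{i\ge1}\mathbb{P}(z_i\neq0)=\sum_{i\ge1}\mathbb{P}(|x_1|>i)\le\mathbb{E}|x_1|<+\infty$, so Borel--Cantelli gives $z_i=0$ for all large $i$ almost surely; hence $\sum_{i=1}^N a_iz_i$ is eventually a fixed finite random number and $\frac1N\sum_{i=1}^N a_iz_i\asto0$. For the centring contribution I would use dominated convergence to get $\mathbb{E}y_i=\mathbb{E}\bigl[x_1\mathbf{1}_{\{|x_1|\le i\}}\bigr]\to\mathbb{E}x_1=0$, and then Cauchy--Schwarz, $\bigl|\frac1N\sum_{i=1}^N a_i\mathbb{E}y_i\bigr|\le\bigl(\frac1N\sum_{i=1}^N a_i^2\bigr)^{1/2}\bigl(\frac1N\sum_{i=1}^N(\mathbb{E}y_i)^2\bigr)^{1/2}$, whose first factor is at most $\sqrt C$ and whose second factor is the Ces\`aro mean of a null sequence and so tends to $0$.

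The main work is the centred truncated term $\frac1N\sum_{i=1}^N a_i\tilde y_i$. The $\tilde y_i$ are independent and centred, with $\mathrm{Var}(\tilde y_i)\le\mathbb{E}\bigl[x_1^2\mathbf{1}_{\{|x_1|\le i\}}\bigr]$, and I would aim to show $\sum_{i\ge1}\mathrm{Var}(a_i\tilde y_i/i)<+\infty$; since the terms $a_i\tilde y_i/i$ are independent and centred, summable variances then force $\sum_{i\ge1}a_i\tilde y_i/i$ to converge almost surely (Kolmogorov), and Kronecker's lemma yields $\frac1N\sum_{i=1}^N a_i\tilde y_i\asto0$. The hard part is that only a first moment is available, so this variance series must be controlled by $\mathbb{E}|x_1|$ rather than by a second moment. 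The key estimate, playing the role that $\sup_i|a_i|<+\infty$ plays in the classical statement, is the weighted tail bound $\sum_{i\ge k}a_i^2/i^2\le C'/k$ for a constant $C'=C'(C)$, which follows from $\sum_{i=1}^n a_i^2\le Cn$ by a dyadic split of the range $i\in[2^m,2^{m+1})$. Combining it with the shell decomposition $\mathbb{E}\bigl[x_1^2\mathbf{1}_{\{|x_1|\le i\}}\bigr]=\sum_{j=1}^i\mathbb{E}\bigl[x_1^2\mathbf{1}_{\{j-1<|x_1|\le j\}}\bigr]$, interchanging summations, and using $x_1^2\le j|x_1|$ on $\{|x_1|\le j\}$, I expect to obtain
\begin{align*}
\sum_{i\ge1}\frac{a_i^2}{i^2}\,\mathrm{Var}(\tilde y_i)
&\le\sum_{j\ge1}\mathbb{E}\bigl[x_1^2\mathbf{1}_{\{j-1<|x_1|\le j\}}\bigr]\sum_{i\ge j}\frac{a_i^2}{i^2}\\
&\le C'\sum_{j\ge1}\frac{1}{j}\,\mathbb{E}\bigl[x_1^2\mathbf{1}_{\{j-1<|x_1|\le j\}}\bigr]\\
&\le C'\sum_{j\ge1}\mathbb{E}\bigl[|x_1|\mathbf{1}_{\{j-1<|x_1|\le j\}}\bigr]=C'\,\mathbb{E}|x_1|<+\infty ,
\end{align*}
which is finite, as needed. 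Adding the three contributions then gives $T_N/N\asto0$.

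In short, the only genuinely subtle point is the truncation bookkeeping: because the truncation level grows with the index there is no fixed second moment to exploit, so the weighted tail estimate $\sum_{i\ge k}a_i^2/i^2\le C'/k$ has to replace the boundedness of the weights; apart from that, the argument is the weighted analogue of the classical proof of the strong law of large numbers, and is essentially the content of \cite{jtp.baxter04}.
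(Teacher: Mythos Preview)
The paper does not give its own proof of this theorem: it is stated as a quoted result and the proof is deferred entirely to the cited reference \cite{jtp.baxter04} (``This theorem was proved in \cite{jtp.baxter04} for real variables''). So there is no in-paper argument to compare against.

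That said, your proposal is a correct self-contained proof and follows the classical route one expects for weighted strong laws under a first-moment assumption: truncate at level $i$, dispose of the large part by Borel--Cantelli via $\sum_i\mathbb{P}(|x_1|>i)\le\mathbb{E}|x_1|$, remove the centring constants by Ces\`aro/Cauchy--Schwarz, and control the centred truncated part by showing $\sum_i a_i^2 i^{-2}\,\mathrm{Var}(\tilde y_i)<\infty$ so that Kolmogorov's one-series theorem plus Kronecker's lemma apply. The only step that is genuinely specific to the weighted setting is your tail estimate $\sum_{i\ge k}a_i^2/i^2\le C'/k$, which is indeed the right substitute for the usual bound $\sup_i|a_i|<\infty$; it follows cleanly from $\sum_{i\le n}a_i^2\le Cn$ by Abel summation (equivalently your dyadic split). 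Your shell decomposition and the bound $x_1^2\le j|x_1|$ on $\{j-1<|x_1|\le j\}$ then reduce the variance series to $C'\,\mathbb{E}|x_1|$, exactly as you wrote. This is essentially the argument of \cite{jtp.baxter04}, so your proof is not merely correct but also aligned with the source the paper invokes.
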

This theorem was proved in \cite{jtp.baxter04} for real variables. Since we are interested in the asymptotic convergence of the real part of $\epsilon_{\sigma,K}$, it can be possible to transpose our problem into the real case. Indeed,
let ${\bf x}=\transconj{\bf V}_1{\bf h}_i^{\#}$ and ${\bf a}=\transconj{\bf P}_t\left(\transconj{\bf H}{\bf H}\right)^{-1}{\bf h}_i^{\#}$, then $\Re\left(\epsilon_{\sigma,K}\right)$ is given by:
$$
\Re\left(\epsilon_{\sigma,K}\right)=\frac{1}{N_1\sigma_{P}^2}\Re(\transconj{\bf x}{\bf a})
$$
Let ${\bf a}_r, {\bf x}_r$ (resp. ${\bf a}_i, {\bf x}_i$) denote respectively the real parts (resp. imaginary parts) of ${\bf a}$ and ${\bf x}$, then 
$$
\Re\left(\epsilon_{\sigma,K}\right)=\frac{1}{N_1\sigma_{P}^2}\transpose{\bf a}_r{\bf x}_r-\transpose{\bf a}_i{\bf x}_i
$$
Referring to theorem \ref{th:almostsure}, the convergence to zero of $\Re\left(\epsilon_{\sigma,K}\right)$ is ensured if  $\frac{1}{2N_1}\left(\transpose{\bf a}_r{\bf a}_r+\transpose{\bf a}_i{\bf a}_i\right)=\frac{1}{2N_1}\|{\bf a}\|_2^2$is finite. This is almost surely true, since:
\begin{align*}
\frac{1}{N_1\sigma_{P}^2}\|{\bf a}\|_2^2&=\frac{1}{N_1\sigma_{P}^2}\mathrm{Tr}\left(\transconj{\bf P}_t\left(\transconj{\bf H}{\bf H}\right)^{-1}{\bf h}_i^{\#}\transconj{\left({\bf h}_i^{\#}\right)}\left(\transconj{\bf H}{\bf H}\right)^{-1}{\bf h}_i^{\#}\right)\\
&={\bf h}_i^{\#}\left(\transconj{\bf H}{\bf H}\right)^{-2}\transconj{\left({\bf h}_i^{\#}\right)} <\|\left(\transconj{\bf H}{\bf H}\right)^{-2}\|_2\left[\left(\transconj{\bf H}{\bf H}\right)^{-1}\right]_{i,i}\\
\end{align*}
This leads to 
$$
\sigma_{w,K}^2-\tilde{\sigma}_{w,K}^2 \longrightarrow 0 \hspace{1cm}\textnormal{almost surely}.
$$
where $\tilde{\sigma}_{w,K}^2$ is given by:
$$
\tilde{\sigma}_{w,K}^2=\sigma_v^2\left[\left(\transconj{\bf H}{\bf H}\right)^{-1}\right]_{i,i}+\frac{c_1(c_2+1)(1+r)\sigma_v^4}{(c_2-1)\sigma_{P}^2}\left[\left(\transconj{\bf H}{\bf H}\right)^{-1}\right]_{i,i}.
$$
Substituting $\sigma_{w,K}^2$ by its asymptotic equivalent in (\ref{eq:characteristic}), we get:
\begin{small}
$$
\mathbb{E}\left[\exp\left(\jmath\Re\left(z^*\left(\Delta{\bf W}_t\right)_{i,j}\right)\right)|{\bf H},{\bf W}_t\right]-\mathbb{E}\left[\exp\left(-\jmath\Re\left(z^*{\bf h}_i^{\#}\Delta{\bf H}_t{\bf w}_j\right)\right)|{\bf W},{\bf H}\right]\exp\left(-\frac{1}{4}|z|^2\tilde{\sigma}_{w,K}^2\right)\longrightarrow 0 \hspace{0.2cm} \textnormal{almost surely}.
$$
\end{small}
Also conditioning on ${\bf W}_t$ and ${\bf H}$, ${\bf h}_i^{\#} \Delta {\bf H}_t {\bf w}_j$ is a Gaussian random variable with zero mean and variance $$\sigma_{m,K}^2=\frac{\sigma_v^2}{N_1\sigma_{P}^2}{\bf h}_i^{\#}\transconj{{\bf w}_j}{\bf w}_j\left({\bf h}_i\right)^{\#}.$$ 

Since $\frac{1}{K}\transconj{{\bf w}_j}{\bf w}_j\longrightarrow \sigma_{w_t}^2$ almost surely, we get that $\sigma_{m,K}^2$ converges almost surely to $\tilde{\sigma}_{m,K}^2$ where
$$
\tilde{\sigma}_{m,K}^2=\frac{c_1(1+r)\sigma_v^2\sigma_{w_t}^2}{\sigma_{P_t}^2}\left[\left(\transconj{\bf H}{\bf H}\right)^{-1}\right]_{i,i},
$$
Using the fact that the characteristic function of ${\bf h}_i^{\#}\Delta{\bf H}_t{\bf w}_j$ is 
$$
\mathbb{E}\left[\exp\left(-\jmath\Re\left(z^*{\bf h}_i^{\#}\Delta{\bf H}_t{\bf w}_j\right)\right)|{\bf W},{\bf H}\right]=\exp\left(-\frac{1}{4}|z|^2\sigma_{m,K}^2\right),
$$
 we obtain that conditionally on the channel:
$$\mathbb{E}\left[\exp\left(\jmath\Re\left(z^*\left(\Delta{\bf W}_t\right)_{i,j}\right)\right)\right]-\exp\left(-\frac{1}{4}|z|^2\left(\tilde{\sigma}_{m,K}^2+\tilde{\sigma}_{w,K}^2\right)\right)\longrightarrow 0 \hspace{0.2cm}\textnormal{almost surely.}
$$
We end up the proof  by noticing that  $\tilde{\sigma}_{m,K}^2+\tilde{\sigma}_{w,K}^2=\sigma_{w_t}^2\delta_t\left[\left(\transconj{\bf H} {\bf H}\right)^{-1}\right]_{i,i}$.
\section{Proof of theorem \ref{theorem:DDST}}
\label{appendix:DDST}
For the DDST scheme, the post-processing noise matrix $\Delta {\bf W}_d$ is given by:
\begin{align*}
\Delta {\bf W}_d&=-{\bf WJ}-{\bf H}^{\#}{\Delta}{\bf H}_d{\bf W}\left({\bf I}_N-{\bf J}\right)+\left({\bf H}^{\#}-{\bf H}^{\#}{\Delta}{\bf H}_d{\bf H}^{\#}\right){\bf V}\left({\bf I}_N-{\bf J}\right)\\
&+\left(\transconj{\bf H}{\bf H}\right)^{-1}\transconj{\Delta{\bf H}}_d\boldsymbol{\Pi}{\bf V}\left({\bf I}_N-{\bf J}\right)\\
&=-{\bf WJ}-{\bf H}^{\#}{\Delta}{\bf H}_d{\bf W}\left({\bf I}_N-{\bf J}\right)+{\bf H}^{\#}{\bf V}\left({\bf I}_N-{\bf J}\right)-{\bf H}^{\#}\Delta{\bf H}_d{\bf H}^{\#}{\bf V}\left({\bf I}_N-{\bf J}\right)\\
&+\left(\transconj{\bf H}{\bf H}\right)^{-1}\transconj{\Delta{\bf H}}_d\boldsymbol{\Pi}{\bf V}\left({\bf I}_N-{\bf J}\right).
\end{align*}
Hence,
\begin{small}
\begin{align*}
\left(\Delta{\bf W}_d\right)_{i,j}&=-\tilde{\bf w}_i{\bf J}_j-{\bf h}^{\#}_i{\bf V}\transconj{\bf P}\left({\bf P}\transconj{\bf P}\right)^{-1}{\bf W}\left({\bf e}_j-{\bf J}_j\right)+{\bf h}^{\#}_i{\bf V}\left({\bf e}_j-{\bf J}_j\right)-{\bf h}^{\#}_i{\bf V}\transconj{{\bf P}}\left({\bf P}\transconj{\bf P}\right)^{-1}{\bf H}^{\#}{\bf V}\left({\bf e}_j-{\bf J}_j\right)\\
&+\widetilde{\bf h}_i\left({\bf P}\transconj{\bf P}\right)^{-1}{\bf P}\transconj{\bf V}\boldsymbol{\Pi}{\bf V}\left({\bf e}_j-{\bf J}_j\right)
\end{align*}
\end{small}
where ${\bf e}_j$ and ${\bf J}_j$ denotes the $j$th columns of ${\bf I}_N$ and ${\bf J}$, respectively and  $\tilde{\bf w}_i$ denotes the $i$th row of the matrix ${\bf W}$.

Let ${\bf v}_1={\bf V}\left({\bf e}_j-{\bf J}_j\right)$, and ${\bf v}_2={\mathrm{vec}({\bf V}\left({\bf P}\transconj{\bf P}\right)^{-1}\transconj{\bf P})}$

The vector $\transpose{\left[\transpose{\bf v}_1,\transpose{\bf v}_2\right]}$ is a Gaussian vector. Since $\mathbb{E}\left[{\bf v}_1\transconj{\bf v}_2\right]=0$, we conclude that ${\bf v}_1$ and ${\bf v}_2$ are independent. Then ${\bf v}_1$ and ${\bf V}_2={\bf V}\left({\bf P}\transconj{\bf P}\right)^{-1}\transconj{\bf P}$ are also independent. 
Moreover, $\mathbb{E}\left[{\bf v}_1\transconj{\bf v}_1\right]=\sigma_v^2\left(1-\frac{K}{N}\right){\bf I}_N$.

Conditioning on ${\bf V}_2$, ${\bf H}$ and ${\bf W}$, $\left(\Delta{\bf W}_d\right)_{i,j}$ is a Gaussian random variable with mean equal to $-\tilde{\bf w}_i{\bf J}_j-{\bf h}_i^{\#}{\bf V}_2{\bf W}\left({\bf e}_j-{\bf J}_j\right)$ and variance $\sigma_{w_d,N}^2$ equal to:
\begin{small}
\begin{align*}
\sigma_{w_d,N}^2&=\mathbb{E}\left[\left({\bf h}_i^{\#}-{\bf h}_i^{\#}{\bf V}_2{\bf H}^{\#}+\widetilde{\bf h}_i\transconj{\bf V}_2\boldsymbol{\Pi}\right){\bf v}_1\transconj{\bf v}_1\left(\transconj{\left({\bf h}_i^{\#}\right)}-\transconj{\left({\bf H}^{\#}\right)}\transconj{\bf V}_2\transconj{\left({\bf h}_i^{\#}\right)}+\boldsymbol{\Pi}{\bf V}_2\transconj{\widetilde{\bf h}}_i
\right)|{\bf V}_2\right]\\
&=\mathbb{E}\left[{\bf h}_i^{\#}{\bf v}_1\transconj{\bf v}_1\transconj{\left({\bf h}_i^{\#}\right)}\right]+\mathbb{E}\left[{\bf h}_i^{\#}{\bf V}_2{\bf H}^{\#}{\bf v}_1\transconj{\bf v}_1\transconj{\left({\bf H}^{\#}\right)}\transconj{\bf V}_2\transconj{\left({\bf h}_i^{\#}\right)}\right]-2\mathbb{E}\left[\Re\left({\bf h}_i^{\#}{\bf V}_2{\bf H}^{\#}{\bf v}_1\transconj{\bf v}_1\transconj{\left({\bf h}_i^{\#}\right)}\right)\right]\\
&+\sigma_v^2(1-\frac{K}{N})\widetilde{\bf h}_i\transconj{\bf V}_2\boldsymbol{\Pi}{\bf V}_2\transconj{(\widetilde{\bf h}_i)}\\
&=(1-\frac{K}{N})\sigma_v^2\left[\left(\transconj{\bf H}{\bf  H}\right)^{-1}\right]_{i,i}+\sigma_v^2(1-\frac{K}{N}){\bf h}_i^{\#}{\bf V}_2\left(\transconj{\bf H}{\bf H}\right)^{-1}\transconj{\bf V}_2\transconj{\left({\bf h}_i^{\#}\right)}
-2(1-\frac{K}{N})\sigma_v^2\Re\left({\bf h}_i^{\#}{\bf V}_2{\bf H}^{\#}\transconj{\left({\bf h}_i^{\#}\right)}\right)\\
&+\sigma_v^2(1-\frac{K}{N})\widetilde{\bf h}_i\transconj{\bf V}_2\boldsymbol{\Pi}{\bf V}_2\transconj{(\widetilde{\bf h}_i)}
\end{align*}
\end{small}
Using the same techniques as before, it can be proved that:
$$
(1-\frac{K}{N})\sigma_v^2{\bf h}_i^{\#}{\bf V}_2\left(\transconj{\bf H}{\bf H}\right)^{-1}\transconj{\bf V}_2\transconj{\left({\bf h}_i^{\#}\right)}-\frac{c_1(1-c_1)\sigma_v^4}{(c_2-1)\sigma_P^2}\left[\left(\transconj{\bf H}{\bf H}\right)^{-1}\right]_{i,i}\to 0 \hspace{0.1cm}\textnormal{almost surely}.
$$
and also that,
$$
\Re\left({\bf h}_i^{\#}{\bf V}_2{\bf H}^{\#}\transconj{\left({\bf h}_i^{\#}\right)}\right)\longrightarrow 0 \hspace{0.1cm}\textnormal{almost surely}.
$$
On the other hand, we have:
$$
\sigma_v^2(1-c_1)\widetilde{\bf h}_i\transconj{\bf V}_2\boldsymbol{\Pi}{\bf V}_2\transconj{\left(\widetilde{\bf h}_i\right)}-\frac{c_1\sigma_v^4(1-c_1)(M-K)}{N\sigma_P^2 }\left[\left(\transconj{\bf H}{\bf H}\right)^{-2}\right]_{i,i}\to 0  \hspace{0.1cm}\textnormal{almost surely}.
$$
Since $\left[\left(\transconj{\bf H}{\bf H}\right)^{-2}\right]-\frac{c_2}{c_2-1}\left[\left(\transconj{\bf H}{\bf H}\right)^{-1}\right]_{i,i}^2\to 0$ by lemma \ref{lemma:new}, we get that:
$$
\sigma_v^2(1-c_1)\widetilde{\bf h}_i\transconj{\bf V}_2\boldsymbol{\Pi}{\bf V}_2\transconj{\left(\widetilde{\bf h}_i\right)}-\frac{\sigma_v^4(1-c_1)c_1c_2}{(c_2-1)}\left[\left(\transconj{\bf H}{\bf H}\right)^{-1}\right]_{i,i}\to 0.
$$
Therefore,
$$
\sigma_{w_d,N}^2-\tilde{\sigma}_{w_d,N}^2\longrightarrow 0 \hspace{0.2cm}  \textnormal{almost surely}
$$
where,
$$
\tilde{\sigma}_{w_d,N}^2=\left(\sigma_v^2(1-c_1)+\frac{c_1(c_2+1)(1-c_1)\sigma_v^4}{(c_2-1)\sigma_{P_d}^2}\right)\left[\left(\transconj{\bf H}{\bf H}\right)^{-1}\right]_{i,i}.
$$
Consequently,
\begin{align*}
\mathbb{E}\left[\exp\left(\jmath\Re\left(z^*\left(\Delta{\bf W}\right)_{i,j}\right)\right)|{\bf H},{\bf W},{\bf V}_2\right]&=\mathbb{E}\left[\exp\left(-\jmath\Re\left(z^*\tilde{\bf w}_i{\bf J}_j+z^*{\bf h}_i^{\#}{\bf V}_2{\bf W}\left({\bf e}_j-{\bf J}_j\right)\right)\right)|{\bf W},{\bf v}_2\right]\\
\times\exp\left(-\frac{1}{4}|z|^2\tilde{\sigma}_{w_d,N}^2\right).
\end{align*}
Conditioning on ${\bf W}$ and ${\bf H}$, $\tilde{\bf w}_i{\bf J}_j+{\bf h}_i^{\#}{\bf V}_2{\bf W}\left({\bf e}_j-{\bf J}_j\right)$ is a Gaussian random variable with mean equal to $\tilde{\bf w}_i{\bf J}_j$ and variance $\sigma_{w_m,N}^2$ given by:
\begin{eqnarray*}
\sigma_{m_d,N}^2&=&\mathbb{E}\left[{\bf h}_i^{\#}{\bf V}_2{\bf W}\left({\bf e}_j-{\bf J}_j\right)\left(\transconj{\bf e}_j-\transconj{\bf J}_j\right)\transconj{\bf W}\transconj{\bf V}_2\transconj{\left({\bf h}_i^{\#}\right)}|{\bf W},{\bf H}\right]\\
&=&\frac{\sigma_v^2}{N\sigma_{P_d}^2}\left[\left(\transconj{\bf H}{\bf H}\right)^{-1}\right]_{i,i}\left(\transconj{\bf e}_j-\transconj{\bf J}_j\right){\bf W}\transconj{\bf W}\left({\bf e}_j-{\bf J}_j\right).
\end{eqnarray*}
Using corollary \ref{corollary:quadratic}, we can easily prove that:
$$
\sigma_{m_d,N}^2-\tilde{\sigma}_{m_d,N}^2\longrightarrow 0 \hspace{0.2cm}\textnormal{almost surely},
$$ 
where
$$
\tilde{\sigma}_{m_d,N}^2=\frac{(1-c_1)\sigma_{w_d}^2\sigma_v^2}{\sigma_{P_d}^2}\left[\left(\transconj{\bf H}{\bf H}\right)^{-1}\right]_{i,i}.
$$
Conditioning only on ${\bf H}$, the conditional characteristic function satisfies:
$$
\mathbb{E}\left[\exp\left(\jmath\Re\left(z^*\left(\Delta{\bf W}_d\right)_{i,j}\right)\right)|{\bf H}\right]-\mathbb{E}\left[\exp\left(-j\Re\left(z^*\tilde{\bf w}_i{\bf J}_j\right)\right)\right]\exp\left(-\frac{1}{4}|z|^2\left(\tilde{\sigma}_{w_d,N}^2+\tilde{\sigma}_{m_d,N}^2\right)\right)\longrightarrow 0 .
$$
Giving the structure of the matrix ${\bf J}$, $\tilde{\bf w}_i{\bf J}_j$ involves the  average of $\frac{1}{c_1}$ symmetric independent and identically distributed discrete random variables,  and therefore,
$$
\mathbb{E}\left[\exp\left(-j\Re\left(z^*\tilde{\bf w}_i\right)\right)\right]=\sum_{i=1}^{\mathcal{Q}}p_i\exp\left(\jmath\Re\left(z^*\alpha_i\right)\right)$$
where $\mathcal{Q}$ is the set of all possible values of $\overline{\bf W}_{i,k}=c_1\sum_{i=1}^{\frac{1}{c_1}}{\bf W}_{i,k}$ and $p_i$ is the probability that $\overline{\bf W}_{i,k}$ takes the value $\alpha_i$.
Consequently;
$$
\mathbb{E}\left[\exp\left(\jmath\Re\left(z^*\left(\Delta{\bf W}_d\right)_{i,j}\right)\right)|{\bf H}\right]=\sum_{i=1}^{\mathcal{Q}}p_i\exp\left(\jmath\Re\left(z^*\alpha_i\right)\right)\exp\left(-\frac{1}{4}|z|^2\left(\tilde\sigma_{m_d,N}^2+\sigma_{w_d,N}^2\right)\right).
$$
We conclude the proof by noting that 
$$
\tilde\sigma_{m_d,N}^2+\sigma_{w_d,N}^2=\sigma_{w_d}^2\left[\left(\transconj{\bf H}{\bf H}\right)^{-1}\right]_{i,i}\delta_d.
$$

\bibliographystyle{bmc_article}
\bibliography{biblio}


\begin{thebibliography}{10}
\providecommand{\url}[1]{[#1]}
\providecommand{\urlprefix}{}

\bibitem{hassibi}
Hassibi B, Hochwald B: \textbf{{How Much Training is Needed in Multiple-Antenna
  Wireless Links?}} \emph{IEEE Trans. Information Theory} 2003,
  \textbf{49}(4):951--963.

\bibitem{bohlin}
Bohlin P, Tapio M: \textbf{{Performance Evaluation of MIMO Communication
  Systems based on Superimposed Pilots}}. \emph{ICASSP} 2004,
  \textbf{4}:425--428.

\bibitem{bohlin1}
Codray M, Bohlin P: \textbf{{Training Based Mimo Systems: Part I: Performance
  Comparison}}. \emph{IEEE Transactions on Signal Processing} 2007,
  \textbf{55}(11):5464--5476.

\bibitem{ghogho}
Ghogho M, Swami A: \textbf{{Channel Estimation for MIMO Systems Using
  Data-Dependent Superimposed Training}}. \emph{Allerton Conference} 2004.

\bibitem{ieeeWC.wang07}
Wang C, Au EKS, Murch RD, Mow WH, Cheng RS, Lau V: \textbf{{On the Performance
  of the MIMO Zero-Forcing Receiver in the Presence of Channel Estimation
  Error}}. \emph{{"IEEE Trans. on Wireless Communications"}} 2007,
  \textbf{6}(3).

\bibitem{ieeeWC.au07}
Au EKS, Wang C, Sfar S, Murch RD, Mow WH, Lau VKN, Cheng RS, Letaief KB:
  \textbf{{Error Probability for MIMO Zero-Forcing Receiver with Adaptive Power
  Allocation in the Presence of Imperfect Channel State Information}}.
  \emph{{"IEEE Trans. on Wireless Communications"}} 2007, \textbf{6}(4).

\bibitem{ghoghossp}
Ghogho M, Swami A: \textbf{{Optimal Training for Affine-Precoded and
  Cyclic-Prefixed Block Transmissions}}. \emph{IEEE Workshop on Statistical
  Signal Processing} 2005.

\bibitem{book.magnus07}
Magnus JR, Neudecker H: \emph{{Matrix Differential Calculus with Applications
  in Statistics and Econometrics}}. John Wiley, 3rd edition 2007.

\bibitem{spawc2006}
Wang C, AU E: \textbf{{Closed-form Outage Probability and BER of MIMO Zero
  Forcing Receiver in the Presence of Imperfect CSI}}. \emph{SPAWC} 2006.

\bibitem{book.proakis95}
Proakis JG: \emph{Digital Communications}. New York, USA: {"Mc Graw Hill"},
  third edition 1995.

\bibitem{gore}
Gore DA, Heath RW, Paulraj AJ: \textbf{{Transmit Selection in Spatial
  Multiplexing Systems}}. \emph{IEEE Communications Letters} 2002.

\bibitem{winters}
Winters J, JSalz, Gitlin R: \textbf{{The Impact of Antenna Diversity on the
  Capacity of Wireless Communication Systems}}. \emph{IEEE Trans. Commun.}
  1994.

\bibitem{thomas}
Eng T, Milstein L: \textbf{{Coherent DS-CDMA Performance in Nakagami Multipath
  Fading}}. \emph{IEEE Transactions on Communications} 1995,
  \textbf{43}:1134--1143.

\bibitem{book.gradshteyn07}
Gradshteyn IS, Ryzhik IM: \emph{{Table Of Integrals, Series and Products}}.
  Academic Press, 7th edition 2007.

\bibitem{iscas.xu05}
Xu R, Lau FCM: \textbf{{Performance analysis for MIMO systems using zero
  forcing detector over Rice fading channel}}. \emph{{" Proceedings of the IEEE
  International Symposium on Circuits and Systems"}} 2005,
  \textbf{5}:4955--4958.

\bibitem{hedayat}
{A Hedayat and A Nosratinia}: \textbf{{Outage and Diversity of Linear Receivers
  in Flat-Fading MIMO Channels}}. \emph{IEEE Transactions on Signal Processing}
  2007, \textbf{55}(12):5868--5873.

\bibitem{anlpro.silverstein98}
Bai Z, Silverstein J: \textbf{No Eigenvalues Outside the Support of the
  Limiting Spectral Distribution of Large Dimensional Sample Covariance
  Matrices}. \emph{{"Annals of Probability"}} 1998, \textbf{26}:316--345.

\bibitem{book.verdu04}
Tulino AM, Verd\'u S: \emph{Random Matrix Theory and Wireless Communications}.
  Now Publishers 2004.

\bibitem{anlpro.silverstein85}
Silverstein JW: \textbf{{The smallest Eigenvalue of a Large Dimensional Wishart
  Matrix}}. \emph{{"Annals of Probability"}} 1985, \textbf{13}(4):1364--1368.

\bibitem{jtp.baxter04}
Baxter J, Jones R, Lin M, Olsen J: \textbf{{SLLN for Weighted Independent
  Identically Distributed Random Variables}}. \emph{{" Journal of Theoretical
  Probability"}} 2004, \textbf{1}:165--181.

\end{thebibliography}

\newcommand{\BMCxmlcomment}[1]{}

\BMCxmlcomment{

<refgrp>

<bibl id="B1">
  <title><p>{How Much Training is Needed in Multiple-Antenna Wireless
  Links?}</p></title>
  <aug>
    <au><snm>Hassibi</snm><fnm>B.</fnm></au>
    <au><snm>Hochwald</snm><fnm>B.M.</fnm></au>
  </aug>
  <source>IEEE Trans. Information Theory</source>
  <pubdate>2003</pubdate>
  <volume>49</volume>
  <issue>4</issue>
  <fpage>951</fpage>
  <lpage>963</lpage>
</bibl>

<bibl id="B2">
  <title><p>{Performance Evaluation of MIMO Communication Systems based on
  Superimposed Pilots}</p></title>
  <aug>
    <au><snm>Bohlin</snm><fnm>P.</fnm></au>
    <au><snm>Tapio</snm><fnm>M.</fnm></au>
  </aug>
  <source>ICASSP</source>
  <pubdate>2004</pubdate>
  <volume>4</volume>
  <fpage>425</fpage>
  <lpage>428</lpage>
</bibl>

<bibl id="B3">
  <title><p>{Training Based Mimo Systems: Part I: Performance
  Comparison}</p></title>
  <aug>
    <au><snm>Codray</snm><fnm>M.</fnm></au>
    <au><snm>Bohlin</snm><fnm>P.</fnm></au>
  </aug>
  <source>IEEE Transactions on Signal Processing</source>
  <pubdate>2007</pubdate>
  <volume>55</volume>
  <issue>11</issue>
  <fpage>5464</fpage>
  <lpage>5476</lpage>
</bibl>

<bibl id="B4">
  <title><p>{Channel Estimation for MIMO Systems Using Data-Dependent
  Superimposed Training}</p></title>
  <aug>
    <au><snm>Ghogho</snm><fnm>M.</fnm></au>
    <au><snm>Swami</snm><fnm>A.</fnm></au>
  </aug>
  <source>Allerton Conference</source>
  <pubdate>2004</pubdate>
</bibl>

<bibl id="B5">
  <title><p>{On the Performance of the MIMO Zero-Forcing Receiver in the
  Presence of Channel Estimation Error}</p></title>
  <aug>
    <au><snm>Wang</snm><fnm>C.</fnm></au>
    <au><snm>Au</snm><fnm>E. K. S.</fnm></au>
    <au><snm>Murch</snm><fnm>R. D.</fnm></au>
    <au><snm>Mow</snm><fnm>W. H.</fnm></au>
    <au><snm>Cheng</snm><fnm>R. S.</fnm></au>
    <au><snm>Lau</snm><fnm>V.</fnm></au>
  </aug>
  <source>{"IEEE Trans. on Wireless Communications"}</source>
  <pubdate>2007</pubdate>
  <volume>6</volume>
  <issue>3</issue>
</bibl>

<bibl id="B6">
  <title><p>{Error Probability for MIMO Zero-Forcing Receiver with Adaptive
  Power Allocation in the Presence of Imperfect Channel State
  Information}</p></title>
  <aug>
    <au><snm>Au</snm><fnm>E. K. S.</fnm></au>
    <au><snm>Wang</snm><fnm>C.</fnm></au>
    <au><snm>Sfar</snm><fnm>S.</fnm></au>
    <au><snm>Murch</snm><fnm>R. D.</fnm></au>
    <au><snm>Mow</snm><fnm>W. H.</fnm></au>
    <au><snm>Lau</snm><fnm>V. K. N.</fnm></au>
    <au><snm>Cheng</snm><fnm>R. S.</fnm></au>
    <au><snm>Letaief</snm><fnm>K. B.</fnm></au>
  </aug>
  <source>{"IEEE Trans. on Wireless Communications"}</source>
  <pubdate>2007</pubdate>
  <volume>6</volume>
  <issue>4</issue>
</bibl>

<bibl id="B7">
  <title><p>{Optimal Training for Affine-Precoded and Cyclic-Prefixed Block
  Transmissions}</p></title>
  <aug>
    <au><snm>Ghogho</snm><fnm>M.</fnm></au>
    <au><snm>Swami</snm><fnm>A.</fnm></au>
  </aug>
  <source>IEEE Workshop on Statistical Signal Processing</source>
  <pubdate>2005</pubdate>
</bibl>

<bibl id="B8">
  <title><p>{Matrix Differential Calculus with Applications in Statistics and
  Econometrics}</p></title>
  <aug>
    <au><snm>Magnus</snm><fnm>J. R.</fnm></au>
    <au><snm>Neudecker</snm><fnm>H.</fnm></au>
  </aug>
  <publisher>John Wiley</publisher>
  <edition>3</edition>
  <pubdate>2007</pubdate>
</bibl>

<bibl id="B9">
  <title><p>{Closed-form Outage Probability and BER of MIMO Zero Forcing
  Receiver in the Presence of Imperfect CSI}</p></title>
  <aug>
    <au><snm>Wang</snm><fnm>C.</fnm></au>
    <au><snm>AU</snm><fnm>E.K.S.</fnm></au>
  </aug>
  <source>SPAWC</source>
  <pubdate>2006</pubdate>
</bibl>

<bibl id="B10">
  <title><p>Digital Communications</p></title>
  <aug>
    <au><snm>Proakis</snm><fnm>J. G.</fnm></au>
  </aug>
  <publisher>New York, USA: {"Mc Graw Hill"}</publisher>
  <edition>third</edition>
  <pubdate>1995</pubdate>
</bibl>

<bibl id="B11">
  <title><p>{Transmit Selection in Spatial Multiplexing Systems}</p></title>
  <aug>
    <au><snm>Gore</snm><fnm>D. A.</fnm></au>
    <au><snm>Heath</snm><fnm>R. W.</fnm></au>
    <au><snm>Paulraj</snm><fnm>A. J.</fnm></au>
  </aug>
  <source>IEEE Communications Letters</source>
  <pubdate>2002</pubdate>
</bibl>

<bibl id="B12">
  <title><p>{The Impact of Antenna Diversity on the Capacity of Wireless
  Communication Systems}</p></title>
  <aug>
    <au><snm>Winters</snm><fnm>J.</fnm></au>
    <au><cnm>J.Salz</cnm></au>
    <au><snm>Gitlin</snm><fnm>R.</fnm></au>
  </aug>
  <source>IEEE Trans. Commun.</source>
  <pubdate>1994</pubdate>
</bibl>

<bibl id="B13">
  <title><p>{Coherent DS-CDMA Performance in Nakagami Multipath
  Fading}</p></title>
  <aug>
    <au><snm>Eng</snm><fnm>T.</fnm></au>
    <au><snm>Milstein</snm><fnm>L.B.</fnm></au>
  </aug>
  <source>IEEE Transactions on Communications</source>
  <pubdate>1995</pubdate>
  <volume>43</volume>
  <fpage>1134</fpage>
  <lpage>1143</lpage>
</bibl>

<bibl id="B14">
  <title><p>{Table Of Integrals, Series and Products}</p></title>
  <aug>
    <au><snm>Gradshteyn</snm><fnm>I. S.</fnm></au>
    <au><snm>Ryzhik</snm><fnm>I. M.</fnm></au>
  </aug>
  <publisher>Academic Press</publisher>
  <edition>7</edition>
  <pubdate>2007</pubdate>
</bibl>

<bibl id="B15">
  <title><p>{Performance analysis for MIMO systems using zero forcing detector
  over Rice fading channel}</p></title>
  <aug>
    <au><snm>Xu</snm><fnm>R.</fnm></au>
    <au><snm>Lau</snm><fnm>F. C. M</fnm></au>
  </aug>
  <source>{" Proceedings of the IEEE International Symposium on Circuits and
  Systems"}</source>
  <publisher>Kobe, Japan</publisher>
  <pubdate>2005</pubdate>
  <volume>5</volume>
  <fpage>4955</fpage>
  <lpage>4958</lpage>
</bibl>

<bibl id="B16">
  <title><p>{Outage and Diversity of Linear Receivers in Flat-Fading MIMO
  Channels}</p></title>
  <aug>
    <au><cnm>{A. Hedayat and A. Nosratinia}</cnm></au>
  </aug>
  <source>IEEE Transactions on Signal Processing</source>
  <pubdate>2007</pubdate>
  <volume>55</volume>
  <issue>12</issue>
  <fpage>5868</fpage>
  <lpage>5873</lpage>
</bibl>

<bibl id="B17">
  <title><p>No Eigenvalues Outside the Support of the Limiting Spectral
  Distribution of Large Dimensional Sample Covariance Matrices</p></title>
  <aug>
    <au><snm>Bai</snm><fnm>Z.D.</fnm></au>
    <au><snm>Silverstein</snm><fnm>J.W.</fnm></au>
  </aug>
  <source>{"Annals of Probability"}</source>
  <pubdate>1998</pubdate>
  <volume>26</volume>
  <issue>1</issue>
  <fpage>316</fpage>
  <lpage>-345</lpage>
</bibl>

<bibl id="B18">
  <title><p>Random Matrix Theory and Wireless Communications</p></title>
  <aug>
    <au><snm>Tulino</snm><fnm>AM</fnm></au>
    <au><snm>Verd\'u</snm><fnm>S</fnm></au>
  </aug>
  <publisher>Now Publishers</publisher>
  <pubdate>2004</pubdate>
</bibl>

<bibl id="B19">
  <title><p>{The smallest Eigenvalue of a Large Dimensional Wishart
  Matrix}</p></title>
  <aug>
    <au><snm>Silverstein</snm><fnm>J. W.</fnm></au>
  </aug>
  <source>{"Annals of Probability"}</source>
  <pubdate>1985</pubdate>
  <volume>13</volume>
  <issue>4</issue>
  <fpage>1364</fpage>
  <lpage>-1368</lpage>
</bibl>

<bibl id="B20">
  <title><p>{SLLN for Weighted Independent Identically Distributed Random
  Variables}</p></title>
  <aug>
    <au><snm>Baxter</snm><fnm>J.</fnm></au>
    <au><snm>Jones</snm><fnm>R.</fnm></au>
    <au><snm>Lin</snm><fnm>M.</fnm></au>
    <au><snm>Olsen</snm><fnm>J.</fnm></au>
  </aug>
  <source>{" Journal of Theoretical Probability"}</source>
  <pubdate>2004</pubdate>
  <volume>1</volume>
  <fpage>165</fpage>
  <lpage>-181</lpage>
</bibl>

</refgrp>
} 

\end{document}